\tikzset{
    auto,node distance =1 cm and 1 cm,semithick,
    var/.style ={circle, draw, minimum width = 1cm, ultra thick},
    latent/.style ={regular polygon, regular polygon sides=3, inner sep=1pt, draw, minimum width = 1.2cm, ultra thick},
    point/.style = {circle, draw, inner sep=0.06cm, fill, node contents={}},
    triangle/.style = {regular polygon, regular polygon sides=3, draw, inner sep=0.06cm, fill, node contents={}},
    bidir/.style={Latex-Latex,dashed},
    dir/.style={-Latex, thick},
    el/.style = {inner sep=2pt, align=left, sloped}
}
\tikzstyle{vertex}=[circle, fill=black!10, draw=black]
\tikzstyle{edge}=[thick]
\tikzstyle{clique}=[line width=4, draw=black!70]
\newtheorem{prop}{Proposition}
\newtheorem{lemma}{Lemma}
\newcommand{\Do}[1]{\mathrm{do}(#1)}
\newcommand{\ACE}[2]{\mathrm{ACE}_{#1 \rightarrow #2}}
\newcommand{\qACE}[2]{\mathrm{qACE}_{#1 \rightarrow #2}}
\newcommand{\eye}{\mathbbm{1}}
\newcommand{\corr}[1]{\langle #1 \rangle}
\newcommand{\dcup}{\sqcup}
\newcommand{\pa}{\mathrm{pa}}
\newcommand{\ch}{\mathrm{ch}}
\newcommand{\Cl}{\mathcal{C}}
\newcommand{\io}{\mathrm{io}}
\newcommand{\mI}{\mathcal{I}}
\renewcommand{\vec}{\mathbf}
\newcommand{\asmobs}{\sigma_{a|x}}
\newcommand{\asmint}{\sigma_{\textrm{do}(a)}}
\begin{document}
\title{Observational-Interventional Bell Inequalities}

\author{Davide Poderini}
\affiliation{International Institute of Physics, Federal University of Rio Grande do Norte, 59078-970, Natal, Brazil}

\author{Ranieri Nery}
\affiliation{ICFO - Institut de Ciencies Fotoniques, The Barcelona Institute of Science and Technology, Castelldefels (Barcelona) 08860, Spain}

\author{George Moreno}
\affiliation{Departamento de Computação, Universidade Federal Rural de Pernambuco, 52171-900, Recife, Pernambuco, Brazil}

\author{Santiago Zamora}
\affiliation{International Institute of Physics, Federal University of Rio Grande do Norte, 59078-970, Natal, Brazil}
\affiliation{Departamento de F\'isica Te\'orica e Experimental, Federal University of Rio Grande do Norte, 59078-970 Natal, Brazil}

\author{Pedro Lauand}
\affiliation{Instituto de Física “Gleb Wataghin”, Universidade Estadual de Campinas, 130830-859, Campinas, Brazil}

\author{Rafael Chaves}
\affiliation{International Institute of Physics, Federal University of Rio Grande do Norte, 59078-970, Natal, Brazil}
\affiliation{School of Science and Technology, Federal University of Rio Grande do Norte, Natal, Brazil}

\begin{abstract}
Generalizations of Bell's theorem, particularly within quantum networks, are now being analyzed through the causal inference lens. However, the exploration of interventions, a central concept in causality theory, remains significantly unexplored. In this work we give an initial step in this direction, by analyzing the instrumental scenario and proposing novel hybrid Bell inequalities integrating observational and interventional data. Focusing on binary outcomes with any number of inputs, we obtain the complete characterization of the observational-interventional polytope, equivalent to a Hardy-like Bell inequality albeit describing a distinct quantum experiment. To illustrate its applications, we show a significant enhancement regarding threshold detection efficiencies for quantum violations also showing the use of these hybrid approach in quantum steering scenarios.
\end{abstract}

\maketitle
 
\section{Introduction}

Quantum correlations, those exhibited between two or more quantum systems and that cannot be explained in classical terms, are the core of quantum theory and its applications for information processing. From those, the correlations emerging in Bell experiments \cite{brunner2014bell} involving observers separated by space-like distances, represent the strongest form of nonclassical behavior. What makes them particularly remarkable is their ability to be verified solely based on assumptions about the underlying causal structure of the experiment, without any need to delve into the inner workings of measurement or state preparation devices. This device-independent framework offers a significant advantage in quantum information processing, ensuring the security and reliability of quantum protocols even if the devices are not accurate or even trustworthy.

Building upon Bell's theorem, recent research has unveiled that integrating causal networks with independent sources of correlations and communication between parties can lead to new forms of nonclassical behavior \cite{tavakoli2022bell}. The bilocal causal structure \cite{branciard2010characterizing,branciard2012bilocal,carvacho2017experimental,andreoli2017experimental,saunders2017experimental,sun2019experimental} underlying entanglement swapping \cite{zukowski1993event}, for example, allows for the activation of non-classicality in measurement devices \cite{pozas2019bounding}, the demonstration of the necessity of complex numbers \cite{renou2021quantum} and even self-testing of quantum theory \cite{weilenmann2020self}. Meanwhile, the triangle network produces nonclassicality without the need for measurement choices \cite{fritz2012beyond,renou2019genuine,chaves2021causal,polino2023experimental} and refined notions for multipartite nonlocality \cite{coiteux2021no,suprano2022experimental}. In turn, scenarios with communication can be employed to understand the requirements for the classical simulation of quantum states \cite{toner2003communication,brask2017bell}, to exclude specific nonlocal hidden variable models \cite{groblacher2007experimental,ringbauer2016experimental}, be applied in communication complexity \cite{buhrman2010nonlocality,ho2022entanglement}, also offering a new and under-explored way of detecting non-classical behavior through interventions rather than pure observations of a quantum system \cite{agresti2022experimental,gachechiladze2020quantifying}. 

Interventions are an essential tool in causal inference, enabling us to determine the causal relationships between variables in a given process \cite{pearl2009causality,balke1997bounds}. Unlike passive observations, interventions involve locally changing the underlying causal structure of an experiment, such as erasing all external influences that a given variable might have and putting it under the exclusive control of an observer. Interestingly, considering the case of the instrumental causal structure \cite{bonet2013instrumentality,chaves2018quantum,poderini2020exclusivity}, classical bounds on causal influence among the involved variables can be violated even when no Bell-like violation is possible \cite{agresti2022experimental,gachechiladze2020quantifying,liu2022quantum,cao2021detection,miklin2022causal}. 

Through interventions, one can demonstrate the quantum behavior of a system that may appear classical at the observational level. Previous works, however, have been limited to violations of causal bounds that rely on a specific measure of causal influence called the average causal effect (ACE) \cite{pearl2009causality,gachechiladze2020quantifying,hutter2023quantifying}. That is, all the interventional data from the experiment is coarse-grained in a single number, the ACE. Here we propose a new approach that considers all available interventional and observational data in a given experiment also considering its connection with quantum steering \cite{uola2020quantum,nery2018quantum}. Our approach defines a geometrical object that we name the observational-interventional polytope, which is bounded by hybrid (observational-interventional) inequalities that subsume all Bell-like and causal bounds previously considered in the literature \cite{pearl1995testability,bonet2013instrumentality,Kedagni2020}. As we show, by deriving new hybrid inequalities and applying them in a number of cases, this approach allows us to better detect and characterize non-classical behavior. 

The paper is organized as follows. In Sec. \ref{sec:sec2} we introduce the instrumental scenario, interventions and the known causal bounds. In sec. \ref{sec:sec3} we provide a full characterization of certain cases in terms of hybrid observational-interventional Bell inequalities, show how they improve the known bounds on detection efficiencies and discuss their equivalence to standard Bell inequalities. In Sec. \ref{sec:sec4} we generalize the approach beyond the instrumental scenario, showing the connection between DAGs with interventions with exogenized DAGs without interventions. In Sec. \ref{sec:steering} we describe the use of interventions in quantum steering \cite{wiseman2007steering,cavalcanti2016quantum}. Finally, in Sec. \ref{sec:discussion} we discuss our results and possible directions for future research.

\section{Instrumental scenario and interventions}
\label{sec:sec2}
We start discussing the instrumental causal structure \cite{pearl1995testability,chaves2018quantum,van2019quantum}, a paradigmatic scenario in which interventions and the quantification of causal influences play a central role.

If two variables $A$ and $B$ are found to be correlated, that is $p(a,b)\neq p(a)p(b)$, a basic question is to understand whether such correlations are due to direct causal influences from $A$ to $B$, or due to some common cause described (classically) by a random variable $\Lambda$. The variable $\Lambda$ is a confounding factor, the source of the mantra in statistics that ``correlation does not imply causation". More precisely, unless we have empirical control over such factors, we cannot distinguish between causal models of the type $A \rightarrow B$ from $A \leftarrow \Lambda \rightarrow B$. In most cases, however, confounding factors have to be treated as latent variables, and in order to reveal cause-and-effect relations one typically has to rely on interventions. Differently from passive observations, an intervention locally changes the underlying causal relations, erasing all causes acting on the variable we intervene upon. For instance, intervening on the variable $A$, we erase any correlation between $A$ and $B$ mediated by $\Lambda$. If after the intervention, we still observe correlations between $A$ and $B$, those can now be assertively associated with the direct causal influence $A \rightarrow B$.

Interventions provide a natural way to quantify causality \cite{janzing2013quantifying}, a widely used measure being the average causal effect, defined as \cite{pearl2009causality}
\begin{equation}
\label{eq:ACE_def}
    \ACE{A}{B} =  \max_{a,a',b} \left| p(b|\Do{a})-p(b|\Do{a'}) \right| \, ,
\end{equation}
The do-probability $p(b\vert\Do{a})=\sum_{\lambda}p(b\vert a,\lambda)p(\lambda)$ is associated with the intervention and differs from the observational probability  $p(b\vert a)=\sum_{\lambda}p(b\vert a,\lambda)p(\lambda\vert a)$. Interestingly, by introducing an instrumental variable $X$, controlled by the experimenter, and satisfying two causal assumptions, it is possible to infer $\ACE{A}{B}$ without the actual need for interventions. The instrumental variable is assumed to be independent of the confounding factors, that is,  $p(x,\lambda)=p(x)p(\lambda)$. Moreover, the correlations between $X$ and $B$ are mediated by $A$, that is, while $X$ has a direct causal influence over $A$ it does not over B. To illustrate the power of an instrumental variable, assume $A$ and $B$ are linearly related as $b=\gamma \cdot a + \lambda$, where we can interpret $\gamma$ as the amount of causal influence. Since $X$ and $\Lambda$ are statistically independent, we have that $\gamma= Cov(X,B)/Cov(X,A)$, where $Cov(X,A)=\corr{XA}-\corr{X}\corr{A}$ is the covariance between $A$ and $X$, and similarly for $Cov(X,B)$. That is, without any information about $\Lambda$, simply looking at the correlations between the instrument and the variables $A$ and $B$, one can estimate $\gamma$. 

Interestingly, this estimation of causal influences can also be performed in a device-independent setting where the functional relation between the variables is unknown. First notice that the underlying causal structure in the instrumental scenario is represented by the directed acyclic graph (DAG) depicted in Fig.~\ref{fig:instrumental_dag}. From this DAG and the causal Markov condition it follows that any observational distribution compatible with the instrumental scenario can be (classically) decomposed as
\begin{equation}
\label{eq:instrumental_markov} p(a,b|x)=\sum_{\lambda}p(a|x,\lambda)p(b|a,\lambda)p(\lambda) \,.
\end{equation}

As shown in \cite{balke1997bounds}, the observational probability distribution $p(a,b\vert x)$ also imposes restrictions on the interventional distribution, bounds of the form $\ACE{A}{B} \geq f(p(a,b\vert x))$, where $f(\cdot)$ is a linear function of the observed distribution. That is, with the use of an instrumental variable we can infer the effect of interventions without the actual need of performing them. Bounds on the $\ACE{A}{B}$ are thus hybrid Bell inequalities, that differently from the paradigmatic case, combine both observational and interventional data.

To illustrate, in what follows, we discuss in detail a few concrete cases, always assuming that all three  observable variables have finite cardinality that we denote as $l=|X|$, $m=|A|$, $n=|B|$, and we write their distribution as $p(a,b|x) = P(A=a, B=b| X=x)$ with $a\in \{0,\ldots,m-1\}, b\in \{0,\ldots,n-1\}$ and $x\in \{0,\ldots,l-1\}$. 

We will focus on the case where $m=n=2$. It is known that if we restrict attention to observational data, already with $l=4$ (an instrument assuming four different values) we generate all classes of observational Bell-like inequalities for the instrumental scenario \cite{Kedagni2020}, the so-called instrumental inequalities given by \cite{pearl1995testability,bonet2013instrumentality,Kedagni2020}
\begin{eqnarray}
\label{eq:instrumental}
& &  \mathcal{I}_1=p(0,0|0)+p(0,1|1) - 1\leq 0, \\ \nonumber
& & \mathcal{I}_2=p(0,1|0)-p(0,1|1)-p(1,1|1)-p(1,0|2)-p(0,1|2)\leq 0, \\  \nonumber
& & \mathcal{I}_3=p(0,0|0)+p(1,0|0)-p(0,1|1)-p(1,0|1)-p(0,0|2)\\ \nonumber
& &-p(1,0|2)-p(0,0|3)-p(1,1|3)\leq 0,
\end{eqnarray}
where we have considered an example of each class of inequalities (obtained by relabelling of the variables \cite{miklin2022causal}).

When considering causal bounds of the form $\ACE{A}{B} \geq f(p(a,b\vert x))$, apart from the scenario with $l=2$, there is no complete or systematic characterization of the corresponding inequalities,
the known causal bounds inequalities being given by
\cite{balke1997bounds,miklin2022causal}
\begin{equation}
\label{eq:ACEbound} 
\ACE{A}{B} \geq C_i,
\end{equation}
with
\begin{eqnarray}
\label{eq:ace34}
C_1 & = &  2 p(0,0|0)+p(1,1|0 ) + p(0,1|1)+p(1,1|1)-2, \nonumber\\
     C_2 & = & p(0,0|0) + p(0,0|2) + p(1,0|0) + p(1,1|1) \nonumber \\ &+& p(1,1|2) - 2,\\
     C_3 & = & p(0,0|0)+p(0,0|1)-p(0,1|1)+p(0,1|2) \nonumber \\ \nonumber &+ &p(1,0|0)  - p(1,0|1)+p(1,1|1)+p(1,1|2)-2.
\end{eqnarray}

However, not only these bounds are incomplete but also they rely upon a single parameter, a coarse-graining over the full do-probability information contained in $p(b\vert do(a))$. As we will show next, considering the full data of the instrumental scenario, composed of the observational data $p(a,b|x)$ and the interventional data $p(b\vert do(a))$, we can obtain a complete and very concise description of the instrumental scenario.

\section{Observational-interventional Bell inequalities}
\label{sec:sec3}
The geometry of the problem in the case of $l$ settings for $X$ and dichotomic outcomes ($m=n=2$) simplifies considerably when we combine the observational and interventional data. For any $l$ the set of inequalities defining the allowed distributions, are given by (up to relabeling of the variables) to two classes of inequalities. The first is given by
\begin{equation}
     p(b|\Do{a}) - p(a,b|x) \ge 0,
\end{equation}
a trivial inequality in the sense that it supports no quantum violations.
The second class, that we call $I_{l22}$ inequality, is given by
\begin{align}
\nonumber
 I_{l22}  &= p(b|\Do{a}) -p(a,b|x') + p(a,\bar b|x) \\
     &+ p(\bar a,b|x) - p(\bar a,b|x') \ge 0,
     \label{eq:m22_prob_ineqs}
\end{align}
a non-trivial inequality that, as we will show in the following, can be violated quantum-mechanically. 

Since $A$ and $B$ are dichotomic we can equivalently describe this set in terms of the correlators (expectation values): $\corr{AB}_x = \sum_{ab} (-1)^{a+b} p(ab|x)$, $\corr{B}_x = \sum_{ab} (-1)^{b} p(ab|x)$ and $\corr{B}_{do(a)} = \sum_{b} (-1)^{b} p(b|\Do{a})$.
In this representation, the constraints translate to (up to relabeling of the variables)
\begin{align}
    \nonumber
    &|\corr{AB}_x - \corr{B}_x + 2\corr{B}_{\Do{1}}|-1 \le \corr{A}_x, \\
    &|\corr{AB}_x - \corr{B}_{x'} + \corr{B}_{\Do{1}}| \le 1
    \label{eq:m22_corr_ineqs}
\end{align}
for any $x\neq x' \in \{0,\ldots,l\}$.
\begin{prop}
    The set of allowed observable and interventional distribution in the $l22$ set is completely characterized by~\eqref{eq:m22_prob_ineqs} or, equivalently, by~\eqref{eq:m22_corr_ineqs}.
\end{prop}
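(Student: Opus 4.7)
The plan is to treat the set of admissible $(p(a,b|x), p(b|\Do{a}))$ as a convex polytope $P_{l22}$ and establish the proposition as a minimal facet description. Every latent $\lambda$ consistent with~\eqref{eq:instrumental_markov} can be refined into a deterministic pair $(f_A, f_B)$ with $f_A:\{0,\ldots,l-1\}\to\{0,1\}$ and $f_B:\{0,1\}\to\{0,1\}$, giving $4\cdot 2^l$ vertex strategies; both $p(a,b|x)$ and $p(b|\Do{a})=\sum_\lambda \eye[f_B(a,\lambda)=b]\,p(\lambda)$ are linear in $p(\lambda)$, so $P_{l22}$ is genuinely a polytope and the claim is that~\eqref{eq:m22_prob_ineqs}, together with non-negativity and normalization, form a complete facet description up to the relabeling symmetries $a\to\bar a$, $b\to\bar b$, $x\leftrightarrow x'$ and input permutations.

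Necessity I would dispatch as a direct vertex check. The trivial family $p(b|\Do{a})-p(a,b|x)\ge 0$ holds term-wise in $\lambda$ because $\eye[f_A(x,\lambda)=a]\le 1$. For $I_{l22}$, on any deterministic vertex one splits into four cases according to $(f_A(x),f_A(x'))\in\{a,\bar a\}^2$; in every case the five-term expression collapses to either a single indicator $\eye[f_B(\cdot)=\cdot]$ or to $1-\eye[f_B(\bar a)=b]$, both manifestly non-negative. Convexity then extends the bounds to all of $P_{l22}$.

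Sufficiency — that no further facets are required — is the substantive part. The natural route is to Fourier--Motzkin-eliminate the latent parameters $p(a|x,\lambda), p(b|a,\lambda), p(\lambda)$ from~\eqref{eq:instrumental_markov}, group the surviving facets under the symmetry group generated by bit-flips of $A$ and $B$, input permutations, and $x\leftrightarrow x'$, and verify that every orbit contains a representative from one of the two stated families. The main obstacle is controlling the combinatorial explosion in $l$: the structural claim I would try to prove is that no facet of $P_{l22}$ depends on more than two distinct observational inputs, so the general-$l$ case reduces to $l=2$, where a direct polytope computation identifies exactly~\eqref{eq:m22_prob_ineqs}. The geometric content of this reduction — and the hardest step — is showing that the genuine multi-input observational facets~\eqref{eq:instrumental} of the purely observational instrumental polytope become redundant once the interventional variables $p(b|\Do{a})$ are adjoined and the two-input hybrid family $I_{l22}$ is imposed.

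Finally, the equivalence of~\eqref{eq:m22_prob_ineqs} and the correlator form~\eqref{eq:m22_corr_ineqs} follows from the invertible linear change of variables $p(a,b|x)=\tfrac{1}{4}[1+(-1)^a\corr{A}_x+(-1)^b\corr{B}_x+(-1)^{a+b}\corr{AB}_x]$ together with $\corr{B}_{\Do{a}}=1-2p(1|\Do{a})$; the absolute values in~\eqref{eq:m22_corr_ineqs} simply encode the two sign choices of $a$ and $b$ in~\eqref{eq:m22_prob_ineqs}, which yield both orientations of each affine form.
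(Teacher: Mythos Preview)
Your necessity direction is sound, and you correctly identify that the whole proposition hinges on reducing general $l$ to the base case $l=2$. But that reduction is exactly where your proposal stops short: you state as a \emph{claim} that no facet of $P_{l22}$ involves more than two observational inputs, acknowledge it is ``the hardest step,'' and then offer only Fourier--Motzkin elimination as a tool. Fourier--Motzkin yields a facet list for each fixed $l$ separately; it provides no uniform argument across all $l$, and the intermediate blow-up is exponential. Without an actual proof of the two-input-facet claim, the sufficiency direction is not established.

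The paper closes this gap constructively and bypasses facet enumeration entirely. A classical model for $P_{l22}$ is equivalent to a joint distribution $Q(a_0,\ldots,a_{l-1},b_0,b_1)$ with the right marginals, and the paper builds $Q$ by gluing: for each pair $(i,j)$ of inputs the $l=2$ case supplies a $q_{ij}(a_i,a_j,b_0,b_1)$, and the residual free parameters in these pairwise models leave enough room to make all of them share a common marginal $q(b_0,b_1)$. One then takes $Q = q(b_0,b_1)\prod_i q_{i,i+1}(a_i\mid b_0,b_1)$. This single construction is the idea missing from your plan --- it proves directly that the two-input inequalities already suffice, and as a by-product shows that your facet claim is true. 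For the base case $l=2$ the paper also avoids a raw polytope computation: it uses the bijection~\eqref{eq:instr2bell} to map the observational--interventional data onto a $2{\times}2$ Bell table and then invokes the equivalence between the Hardy-type inequalities and CHSH.
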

\begin{proof}
    We assume that this is true for the case $l=2$, the proof of which can be found in \ref{subsec:hardy_ineq}.
    Consider now a distribution $p(ab|x)$ and $p(b|\Do{a})$, with $|X|=l$ settings that respects~\eqref{eq:m22_corr_ineqs}.
    Since the proposition is true for $l=2$, this means that, for each couple of settings $x_1 = i, x_2 = j$ the distribution $p$ restricted to them, is compatible with the $222$ case, hence there is a joint distribution $q_{ij}(a_{i}, a_{j}, b_{0}, b_{1})$ from which we can obtain $p$ by marginalizing appropriately: 
    \begin{align}
        \nonumber
        p(b|\Do{a}) = \sum_{a_{i}, a_{j}, b_{\bar{a}}} q_{ij}(a_{i}, a_{j}, b_{0}, b_{1})\\
        \nonumber
        p(ab|x=i) = \sum_{a_{j}, b_{\bar{a}}} q_{ij}(a_{i}=a, a_{j}, b_{0}, b_{1}) \; ,
    \end{align}
    and similarly for $x=j$.
    Since $p(b|\Do{a}) = q_{ij}(b_a)$ is fixed for any couple of $i,j$, this defines the family of marginals $q_{ij}(b_0, b_1)$ up to one parameter $s_{ij}$ as:
    \begin{align}
        \nonumber
        q_{ij}(0, 0) &= s_{ij} \\ \nonumber
        q_{ij}(0, 1) &= p(0|\Do{0}) - s_{ij} \\ \nonumber
        q_{ij}(1, 0) &= p(0|\Do{1}) - s_{ij} \\ \nonumber
        q_{ij}(1, 1) &= 1 + s_{ij} - p(0|\Do{0}) - p(0|\Do{1})
    \end{align}
    Likewise, the marginals $q_{ij}(a_i, b_0, b_1)$ are defined by the observational data up one parameter $t_{ij}$ as:
    \begin{align}
        \nonumber
        q_{ij}(0, 0, 0) &= t_{ij} \\ \nonumber
        q_{ij}(0, 0, 1) &= p(0,0|i) - t_{ij} \\ \nonumber
        q_{ij}(0, 1, 0) &= p(0|\Do{1}) - p(1,0|i) - t_{ij} \\ \nonumber
        q_{ij}(0, 1, 1) &= p(0,1|i) + p(1,0|i) + t_{ij} - p(0|\Do{1})
    \end{align}
    with the only restriction that $t_{ij}/s_{ij} = q_{ij}(a_i | b_0 b_1) \le 1$, i.e. $t_{ij} \le s_{ij}$.
    Hence we can choose the $q_{ij}$ so that they have the same marginal $q(b_0, b_1)$ by choosing an appropriate $s \ge t_{ij} \forall i,j$, and we can write them as $q_{ij}(a_i, b_0, b_1) = q_{ij}(a_i, | b_0, b_1) q(b_0, b_1)$.
    We can then construct a complete joint probability distribution $Q$ as follows:
    \begin{multline}
        Q(a_1,\ldots,a_l, b_0, b_1) = \\
        = q_{01}(a_0|b_0 b_1) q_{12}(a_1|b_0 b_1) \cdots q_{l0}(a_l|b_0 b_1) q(b_0, b_1)
    \end{multline}
    This joint distribution reproduces the required observable and interventional distributions, proving that they are compatible with the $l22$ scenario.
\end{proof}

\subsection{A quantum model for observations and interventions}

\begin{figure}
\begin{subfigure}{.49\textwidth}
\begin{tikzpicture}
    \node[var] (b) at (2,0) {$B$};
    \node[var] (a) at (0,0) {$A$};
    \node[var] (x) at (-1,1.5) {$X$};
    \node[latent] (l) at (1,1.5) {$\Lambda$};
    \path[dir] (x) edge (a) (a) edge (b); 
    \path[dir] (l) edge (a) (l) edge (b);
\end{tikzpicture}
\caption{The Instrumental scenario.}
\label{fig:instrumental_dag}
\end{subfigure}
\begin{subfigure}{.49\textwidth}
\centering
    \begin{tikzpicture}
        \node[var] (x) at (-1,1.5) {$X$};
        \node[var, thick, double] (a) at (0,0) {$a$};
        \node[var] (b) at (2,0) {$B$};
        \node[latent] (l) at (1,1.5) {$\Lambda$};
        \path[dir] (a) edge (b) (l) edge (b);
    \end{tikzpicture}
    \caption{The instrumental scenario after an intervention on $A$.}
\label{fig:intervention_dag}
\end{subfigure}
\end{figure}

The description above considers an underlying classical theory of cause and effect that is incompatible with quantum predictions when the latent source $\Lambda$ is given by an entangled quantum state.
In this case, the probability distribution $p(a,b|x)$ is described by the Born rule as
\begin{equation}
\label{eq:instrumental_quantum}
    p(a,b|x)= \tr[M_x^{(a)} \otimes N_a^{(b)} \rho] \;,
\end{equation}
where $\{M_x^{(a)}\}_a$ and $\{N_a^{(b)}\}_b$ are POVM measurement for $A$ and $B$ respectively, and $\rho$ is their shared quantum state. In turn, the quantum version of the do-distribution, for an intervention on $A$ is given by
\begin{equation}
\label{eq:q_do_a}
   p(b|\Do{a}) = \tr[\left(\eye \otimes N_a^{(b)}\right) \rho] \,,
\end{equation}
which in turn allows us to define the quantum ACE as
\begin{equation}
\label{eq:qACE}
    \qACE{A}{B}= \max_{a,a',b} \tr[\eye \otimes \left(N_a^{(b)}-N_{a'}^{(b)}\right) \rho] \,.
\end{equation}

It can be shown that \cite{gachechiladze2020quantifying,agresti2022experimental}
\begin{equation}
\label{eq:qACE_witness}
    \qACE{A}{B} < C_i \le \ACE{A}{B} \,,
\end{equation}
that is, $\qACE{A}{B}$ obtained by interventions on a quantum system can violate the classical causal bounds. Remarkably, this sort of non-classicality can be achieved in scenarios where no instrumental Bell inequality can be violated (for instance, with $l=2$~\cite{henson2014theory}). That is, the use of interventional data allows to reveal the quantum nature of an experiment that relying only upon the observational data would have a classical explanation. 

One can check that this is the case also for the general scenario with dichotomic measurement $l22$, considering the class of inequalities~\eqref{eq:m22_prob_ineqs}.
Using the definitions~\eqref{eq:instrumental_quantum} and~\eqref{eq:q_do_a}, any constraint of the form~\eqref{eq:m22_prob_ineqs} with $x \neq x'$ can be violated up to $I_{l22} = -(\sqrt{2}-1)/2 \approx -0.2071$ with a bipartite state of two qubits $\ket{\Phi^+} = (\ket{00} + \ket{11})/\sqrt{2}$ and projective measurements (see appendix~\ref{sec:I22_qstrategy}).
Moreover, with numerical optimization (using the Navascues-Pironio-Acin (NPA) hierarchy~\cite{navascues2008convergent}) we can confirm that this corresponds to the maximum quantum bound. It is interesting to notice that while the general hybrid inequality we derive here achieves its optimal quantum violation with a maximally entangled state, the causal bound previously considered reaches its maximal quantum violation of partially entangled two-qubit states \cite{gachechiladze2020quantifying}.

\subsection{Detection efficiencies for the violation of hybrid inequalities}

As happens with standard (observational only) Bell inequalities, low detection efficiency is one of the main obstacles to the violation of hybrid Bell inequalities. When focusing solely on coincidence counts, a hybrid inequality can be violated, suggesting non-classical behavior. However, this inference is contingent upon the exclusion of non-detection events. The complete dataset, encompassing both detections and non-detections, might still conform to a classical causal model unless an additional fair sampling assumption is included. While this may seem natural in experiments probing the foundations of quantum mechanics, the introduction of such extra assumptions runs counter to the tenets of a device-independent approach. This becomes especially problematic in cryptographic settings, where an eavesdropper exploiting the detection inefficiency could take advantage of it to hack a key distribution protocol without being detected.

In the following, we analyze the minimum detection efficiencies required for a quantum violation of the $I_{l22}$ inequality. Previous works \cite{cao2021detection} based on causal bounds of the form \eqref{eq:ACEbound} have shown that a quantum violation requires detection efficiencies above $92\%$. As we will show next, the hybrid inequality we propose here improves significantly on this critical efficiency. In fact, the $I_{l22}$ inequality
displays the same features of the CHSH inequality \cite{eberhard1993background,wilms2008local}, a fact that as we prove later on is based on the equivalence of $I_{l22}$ with a Hardy-like Bell inequality \cite{Hardy1993}.

There are different manners to model detection inefficiencies \cite{wilms2008local,chaves2011feasibility,branciard2011detection} and their applicability might depend on the particular experimental setup. Here we consider a model of particular relevance for photonic qubits where the observers need to distinguish two orthogonal polarizations. As the measurement apparatus has only one polarization-sensitive detector, the absence of a photon (the non-detection event) cannot be distinguished from a photon with the wrong polarization (an event we call $a^*$/$b^*$). 
\begin{figure}[t!]
\centering  
\includegraphics[clip,width=0.95\columnwidth]{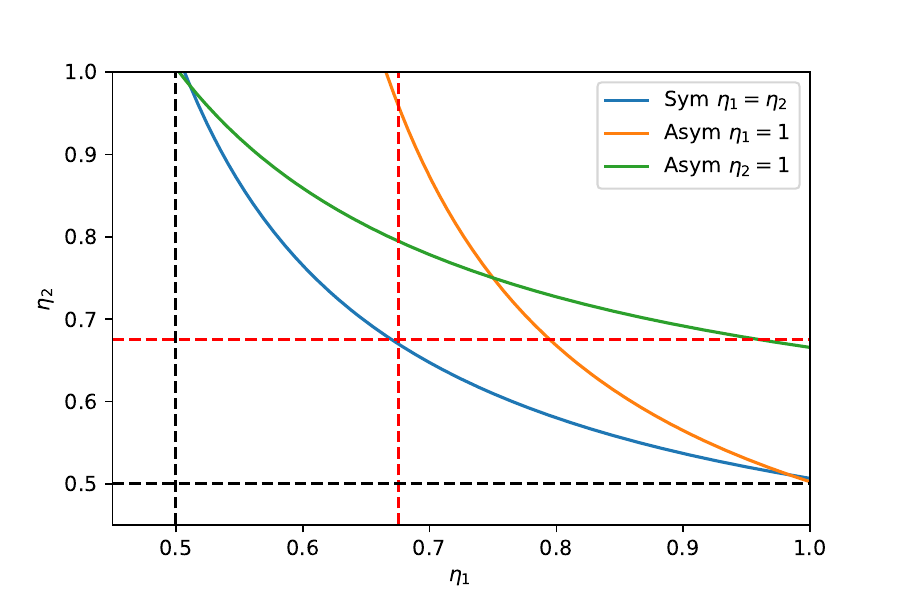} 
\caption{\textbf{Minimum detection inefficiencies for the violation of the $I_{l22}$ inequality.} In the symmetric case (represented by the blue curve) the minimum value of the efficiency found was $\eta_1=\eta_2=2/3$.  In the asymmetric case with one perfect detector, for instance, $\eta_1=1$, the minimum detection efficiency for the other $\eta_2=0.5$ (and vice-versa). These critical efficiencies are represented by the black dashed lines. The green and orange curves were obtained with the optimal parameters for $\eta_1=0.51$ and $\eta_2=0.51$, respectively.}
\label{fig:In222_eff}
\end{figure}

Employing the general framework proposed in \cite{wilms2008local} and the fact that \cite{gachechiladze2020quantifying}
\begin{equation}
    p(ab|x) = p^{Bell}(ab|xa) \label{InstP},
\end{equation}
while the interventional probabilities $p(b|do(a))$ are related as 
\begin{equation}
    p(b|do(a)) = \sum_{a'} p^{Bell}(a'b|xa),\label{DoP}
\end{equation}
where $p^{Bell}(ab|xy)$ are the observational probability distributions obtained in a standard Bell scenario, the ideal instrumental probabilities $p(ab|x)$ are related to the noisy probabilities $p_{\eta_1\eta_2}(ab|x)$ by 
\begin{align}
    &p_{\eta_1\eta_2}(ab|x) = \eta_1\eta_2p(ab|x) \nonumber, \\
    &p_{\eta_1\eta_2}(a^*b|x) = \eta_1\eta_2p(a^*b|x) + (1-\eta_1)\eta_2p(b|do(a^*)) \nonumber,\\
    &p_{\eta_1\eta_2}(ab^*|x) = \eta_1\eta_2p(ab^*|x) + \eta_1(1-\eta_2)p(a|x) ,\\
    &p_{\eta_1\eta_2}(a^*b^*|x) = \eta_1\eta_2p(a^*b^*|x) + \eta_1(1-\eta_2)p(a^*|x)  \nonumber\\
   &\hspace{2.5 cm}+ (1-\eta_1)\eta_2p(b^*|do(a^*)) \nonumber\\ 
   &\hspace{2.5 cm}+(1-\eta_1)(1-\eta_2).\nonumber
   \label{InefficienciesIns}
\end{align}

Within this setup, the $I_{l22}$ inequality was optimized in the case of identical detectors, the symmetric case $\eta_1=\eta_2$, and in the case of asymmetric detectors. For the symmetric case, the minimum $\eta_m$ was found to be $\eta^s_m\approx.6667 = 2/3$. In the asymmetric case, keeping the efficiency of one detector equal to $\eta=1$, the minimum detection efficiency for the other is $\eta^{as}_m=0.5$. In Fig. \ref{fig:In222_eff} we plot the regions for violation of the $I_{l22}$ inequality as a function of $\eta_1$ and $\eta_2$, optimizing over two-qubit states and projective measurements. The intersection of the dashed curves denotes the minimum value of $\eta_1$ and $\eta_2$ to have a violation of the inequality. The black dashed curve corresponds to the minimum value of $\eta$ in the asymmetric case, and the red dashed curve corresponds to the minimum value of $\eta$ in the symmetric case.

The region of violations displayed in Fig. \ref{fig:In222_eff} is identical to what one obtains for the CHSH inequality \cite{wilms2008local}. Indeed, it is known that observational instrumental inequalities can be mapped to standard Bell inequalities \cite{van2019quantum} via the mapping in \eqref{eq:q_do_a}. In particular, the inequality $\mathcal{I}_3$ in Eq. \eqref{eq:instrumental} can be mapped via a lifting \cite{pironio2014all} to the CHSH inequality. More precisely, inequality $\mathcal{I}_3$ can be seen as the CHSH inequality if a fourth measurement (thus $l= \vert X \vert =4 $) is introduced. In this sense, one should expect that the $l22$ scenario recovers the same features of the CHSH scenario. Notice, however, that by including not only observational but interventional data as well, we can recover the minimum detection efficiencies of the CHSH scenario already with $l= \vert X \vert =2 $, a clear advantage of interventions.

\subsection{Equivalence of the interventional inequality to a Hardy-like inequality}
\label{subsec:hardy_ineq}
Herein we prove that the inequality \eqref{eq:m22_prob_ineqs} completely characterizes the classically compatible data tables $p(ab|x)$ and $p(b|do(a))$ with the instrumental scenario for the case $l=2$. To achieve this, we show that the $I_{l22}$ inequalities for $l=2$ are equivalent to considering the Hardy-type Bell inequalities in the simplest Bell scenario. 

The ``Hardy's paradox" \cite{Hardy1993} can arguably be understood as an alternative formulation of Bell's theorem. Truly, for classical models in the Bell scenario, i.e. for 
\begin{equation}
    p(a,b|x,y)=\sum_{\lambda}p(a|x,\lambda)p(b|y,\lambda)p(\lambda),
\end{equation}
if we observe 
\begin{equation}
\label{eq:hardy_assump}
    p(1,0|0,1)=p(0,1|1,0)=p(0,0|0,0)=0,
\end{equation} 
then we must have $p(0,0|1,1)=0$. To see this, we first remark that we can, with no loss of generality, assume that $p(\lambda)$ has full support, i.e. $p(\lambda)>0$. Then, given \eqref{eq:hardy_assump} we obtain the equalities 
\begin{equation}
    \begin{aligned}
        &p(a=1|x=0,\lambda)p(b=0|y=1,\lambda)=0\\
        &p(a=0|x=1,\lambda)p(b=1|y=0,\lambda)=0\\
        &p(a=0|x=0,\lambda)p(b=0|y=0,\lambda)=0.\\
    \end{aligned}
\end{equation}
The first condition gives us two possibilities: either $p(b=0|y=1,\lambda)=0$, in which case $p(0,0|1,1)=0$ follows directly from the classical decomposition, or $p(a=1|x=0,\lambda)=0$, which implies $p(a=0|x=0,\lambda)=1\implies p(b=0|y=0,\lambda)=0\implies p(b=1|y=0,\lambda)=1 \implies p(a=0|x=1,\lambda)=0 \implies p(0,0|1,1)=0$ from the classical decomposition. However, quantum theory predicts one can reach $p_Q(1,0|0,1)=p_Q(0,1|1,0)=p_Q(0,0|0,0)=0$ and $p_Q(0,0|1,1)>0$ providing a direct contradiction of quantum versus classical predictions. 

At first, Bell inequalities and Hardy’s paradox may seem like fundamentally different concepts — Bell is concerned with expectation values, and Hardy with the logical possibility and impossibilities of events. However, due to experimental imperfections, events that ``never" happen are virtually certain to occur eventually. To deal with these conditions, we can generalize Hardy's argument by looking at it as a game between two parts, assigning a cost to a particular event $(a,b|x,y)$ \cite{Mancinska_2014}. The Hardy inequalities can capture this game reminiscent of Hardy's paradox, inequalities of the form
\begin{equation}
\label{eq: hardy_ineq}
    p(1,0|0,1)+p(0,1|1,0)+p(0,0|0,0)\geq p(0,0|1,1). 
\end{equation}
By taking into account the possible relabelings of the outcomes and measurement settings we have
\begin{equation}
p(\bar{a},b|x,y')+p(\bar{a},b|x',y)+p(a,\bar{b}|x,y)\geq p(\bar{a},b|x',y'),
\end{equation}
where $x\neq x'$, $y\neq y'$, $\bar{a}=a\oplus 1$ and similarly for $\bar{b}$. In particular, we can choose the case $y'=\bar{a}$, $y=a$ and use condition \eqref{DoP} and \eqref{InstP} to obtain the corresponding inequality in terms of observational data $p(a,b|x)$ and interventional data $p(b|\Do{a})$. Notice that, for the case where $|A|=2$, by combining \eqref{DoP} and \eqref{InstP} one has a bijective map between $p(a,b|x)$, $p(b|\Do{a})$ and $p(a,b|x,y)$, namely 
\begin{equation}
    \begin{aligned}
        &p(a,b|x,y=a)=p(a,b|x)\\
        &p(\bar{a},b|x,y=a)=p(b|\Do{a})-p(a,b|x).\\
    \end{aligned}
    \label{eq:instr2bell}
\end{equation}
This procedure yields 
\begin{equation}
p(\bar{a},b|x)+p(b|\Do{a})- p(a,b|x')+p(a,\bar{b}|x)\geq p(\bar{a},b|x'),
\end{equation}
which is precisely what we have in Eq. \eqref{eq:m22_prob_ineqs}. Finally, to prove these inequalities completely characterize the interventional polytope, i.e. the polytope delimited by the observational and interventional data tables, we show in Appendix \ref{app:Hardy_CHSH} how from the Hardy-type inequality \eqref{eq: hardy_ineq} (and its relabelings) we can obtain the CHSH inequalities. We prove this fact for completeness. However, the connection between CHSH and the Hardy inequalities has been studied at length, and for details, we refer the reader to \cite{Mansfield_Fritz_2012, Yang_2019, GHIRARDI20081982}. 

We reinforce that even though our hybrid inequality is equivalent to a Hardy-like inequality and thus to the CHSH inequality, this does not mean that the scenario we consider here is just Bell in disguise. The underlying causal structure in an instrumental and Bell scenario are fundamentally different from a physical perspective. In the Bell case we have a spatial scenario, where the correlations between the space-like separated parties are mediated by the source, the shared quantum state between them. In turn, for the instrumental case, we have a time-like scenario, where one of the parties has a direct causal influence over the other, reason why interventions add more structure and information to the problem of detecting non-classicality.

\section{Generalized observational-interventional framework}
\label{sec:sec4}

As we mentioned, the applicability of interventional methods is not limited to the previous examples, but can be used to analyze a vast range of causal scenarios. In this section we will present a general framework that can in principle be used, together with other techniques, to study arbitrary structures when interventional data is available.
Interestingly, using this formalism we will see how the result of section~\ref{subsec:hardy_ineq} is not accidental, but a feature of a more general property.

Let us consider a causal scenario described by a DAG $G=(V, E)$, where $V = U \dcup W$, $U$ and $W$ being the sets of observable and latent nodes respectively, and $E$ the set of edges (directed arrows) between them.
For each node $n \in U$ we denote the set of parent nodes as $\pa_G(n)$ or simply $\pa(n)$ if there is no ambiguity. 
A distribution $p(x_U) = p(\{x_n\}_{n \in U})$ for the variables $X_n$ is said to be \emph{compatible} with $G$ if it obeys the Markov factorization condition:
\begin{equation}
    p(x_U) = \sum_{\vec \lambda} \prod_{n \in V} p(x_n| x_{\pa(n)})\, ,
\end{equation}
where $\vec \lambda$ ranges over all the possible values of latent nodes in $W$.
We denote the set of such distributions as $\Cl(G)$.

Assuming that each observable variable $X_n$ has finite cardinality, it is known that any compatible distribution has a description in terms of finite dimensional latent variables~\cite{rosset2017universal, fraser2020combinatorial}. 
This means that we can equivalently describe the process with a set of random variables $\{\Lambda_l: \Omega \rightarrow \Omega_l]\}_{l \in  W}$, complemented, if necessary, by a set of local noise variables $\{N_k: \Omega \rightarrow \Omega_k\}_{k \in U}$, on some finite dimensional sample set $\Omega$, and a collection of functions $\{f_k\}_{k \in U}$ defining the observable variables as:
\begin{equation}
    X_k = f_k(X_{\pa(k)}, N_k) \, .
    \label{eq:struct_eq}
\end{equation}
This description is called a \emph{Functional Causal Model} and \eqref{eq:struct_eq} are usually called \emph{Structural equations}~\cite{pearl2009causality}.

This description makes the effect of interventions more transparent.
Let us say that we want to perform a sequence of interventions in different subsets $\mI = \{I_m \subseteq U\}_n$ of the observable nodes.
We can identify this as the interventional scenario described by the couple $(G, \mI)$.
The \emph{intervention}, for each set $I \in \mI$, effectively corresponds to considering a different collection of random variables, which we denote as $X_k^{\Do{I}}$, described by the structural equations
\begin{equation}
    X_k^{\Do{I}} = f_k(X_{\pa(k) \setminus I}, x_I, N_k)\, ,
\end{equation}
sharing the same functional relationships as before, but where now the values of the nodes belonging to $I$ is fixed.
From this, we can easily define the \emph{observational-interventional distribution vector} as $\vec p_{\io} = \vec p_0 \oplus \bigoplus_n \vec p_{n}$, where $\vec p_0$ represent the vector associated with the observable distribution $p(x_{U})$, while $\vec p_n$ the one corresponding to the interventional one, for each $I_n$, that is $p\left(x_{U}^{\Do{I_n}}\right)$.
Similarly to the observable-only case, we say that such $\vec p_{\io}$ is compatible with the interventional scenario $(G, \mI)$, formally $\vec p_{\io} \in \Cl_\mI(G)$.

\subsection{Relationship with the exogenized graph}
The appearance of Hardy's inequality for the Bell scenario in the interventional version of the Instrumental scenario, reveals an interesting relationship between the two, that can be easily generalized.

\begin{figure}
\begin{subfigure}{.5\textwidth} 
\begin{tikzpicture}
    \node[var] (b) at (2,0) {$B$};
    \node[var] (c) at (4,0) {$C$};
    \node[var] (a) at (0,0) {$A$};
    \node[latent] (l) at (1,1.5) {$\Lambda$};
    \node[latent] (n) at (3,1.5) {$N$};
    \path[dir] (a) edge (b) (b) edge (c); 
    \path[dir] (l) edge (a) (l) edge (b);
    \path[dir] (n) edge (c) (n) edge (b);
\end{tikzpicture}
\caption{The Interventional scenario.}
\label{fig:exoint_intdag}
\end{subfigure}
\begin{subfigure}{.5\textwidth}
\begin{tikzpicture}
    \node[var] (b) at (1.5,0) {$B$};
    \node[var] (bb) at (3,0) {$\bar{B}$};
    \node[var] (c) at (4.5,0) {$C$};
    \node[var] (a) at (0,0) {$A$};
    \node[latent] (l) at (1,1.5) {$\Lambda$};
    \node[latent] (n) at (3,1.5) {$N$};
    \path[dir] (a) edge (b) (bb) edge (c); 
    \path[dir] (l) edge (a) (l) edge (b);
    \path[dir] (n) edge (c) (n) edge (b);
\end{tikzpicture}
\caption{The corresponding exogenized scenario.}
\label{fig:exoint_exodag}
\end{subfigure}
\caption{The interventional scenario and its corresponding exogenized graph.
In this example the intervention set $I = \{B\}$ is composed by only one element.
The procedure splits the $B$ node in $B$, with  only incoming edges, and $\bar B$, having instead only outgoing edges.}
\label{fig:exoint_dags}
\end{figure}
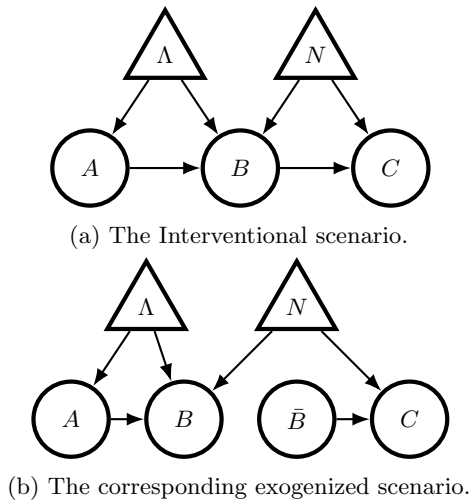

Like before, consider a DAG $G = (V, E), V = U \dcup W$ and a subset $I \subset U$ of its observable nodes, let us define the \emph{exogenized} DAG $G_I$ as the graph where we
\begin{enumerate}
    \item Include additional nodes and variables $X_{\bar{i}}$ for each $i \in I$.
    \item Add edges $(\bar i, n)$ for each $n \in \ch(i)$.
    \item Remove all incoming edges from the nodes $i \in I$.
\end{enumerate}
In other words, $G_I$ can be regarded as the graph where each node belonging to $i \in I$ is split in two nodes $i$ and $\bar i$, the first retaining only the incoming edges, and the second having only the outgoing ones of the original node (see the example in Fig.~\ref{fig:exoint_dags}).

\begin{prop}
    Consider a DAG $G = (V,E)$ and a corresponding interventional scenario $(G, \{I\})$.
    There always exists a surjective mapping $g: \Cl(G_I) \rightarrow \Cl_I(G)$      
\end{prop}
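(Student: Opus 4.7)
The plan is to construct \(g\) using the functional causal model (FCM) description of compatible distributions, exploiting the fact that \(G_I\) carries the same structural equations as \(G\) except that every edge \(i \to k\) with \(i \in I\) is redirected to the new exogenous source \(\bar i\).

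Given \(q \in \Cl(G_I)\), I would pick a FCM realization: a latent distribution \(p(\vec\lambda)\), noise variables \(\{N_n\}\), and structural functions \(\{f_n\}\) satisfying \(X_n = f_n(X_{\pa_{G_I}(n)}, N_n)\), with each \(X_{\bar i}\) an independent exogenous source.  Transport this FCM to \(G\) by reinterpreting every argument \(X_{\bar i}\) appearing in an \(f_k\) (for \(k \in \ch_G(i)\)) as \(X_i\) itself, keeping latents and noise unchanged.  This produces an FCM on \(G\) from which I read off the observational distribution \(p_0(x_U)\) together with the family of interventional distributions \(p(x_U^{\Do{x_I^*}})\), and I set \(g(q) := \vec p_0 \oplus \vec p_I\).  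Then \(g(q) \in \Cl_I(G)\) by construction, since both components arise from a single FCM on \(G\).  The interventional piece also admits the intrinsic expression
\begin{equation*}
    p\!\left(x_{U\setminus I}^{\Do{x_I^*}}\right) = q\!\left(x_{U\setminus I} \,\middle|\, X_{\bar I} = x_I^*\right),
\end{equation*}
because the \(X_{\bar i}\) are exogenous in \(G_I\) and fixing them to \(x_I^*\) reproduces exactly the do-intervention on \(G\).

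For surjectivity I would run the construction backwards.  Given any \((\vec p_0, \vec p_I) \in \Cl_I(G)\), compatibility yields an FCM on \(G\) that realises both distributions jointly; I build an FCM on \(G_I\) with the same latents, noise, and functions \(f_n\), except that every argument of \(f_k\) corresponding to \(i \in I\) is now drawn from the fresh exogenous \(X_{\bar i}\) with any chosen marginal.  The resulting distribution \(q \in \Cl(G_I)\) is a preimage: collapsing \(\bar i \mapsto i\) undoes the graph modification and recovers \(\vec p_0\), while fixing \(X_{\bar i} = x_i^*\) reproduces \(\Do{x_i^*}\) and hence \(\vec p_I\).

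The main technical obstacle is the well-definedness of \(g\) at the distributional level, since an FCM realising a given \(q\) is not unique and different choices could a priori produce different observational images.  This can be resolved by picking a canonical FCM per \(q\) (for instance via the universal latent-variable constructions of~\cite{rosset2017universal,fraser2020combinatorial}); however, for the surjectivity statement itself no well-definedness argument is strictly required, since the reverse construction already exhibits a preimage for every element of \(\Cl_I(G)\).
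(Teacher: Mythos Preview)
Your proposal is correct and follows essentially the same route as the paper: define $g$ via the structural-equation picture by identifying $X_{\bar i}$ with $X_i$ for the observational component and fixing $X_{\bar i}$ to an arbitrary value for the interventional component, then run the construction backwards for surjectivity. The paper sidesteps the well-definedness concern you raise by writing the observational component directly as the distributional formula $\vec p_0 = q(x_U \mid \{x_{\bar i} = x_i\}_{i\in I})$, which depends only on $q$ and not on any choice of FCM---you may prefer this to invoking a canonical realisation.
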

\begin{proof}
    Consider a certain distribution $q \in \Cl(G_I)$.
    For each random variable we can then write structure equations $X_n = f_n(X_{\pa(n) \setminus I}, X_{\bar\pa(n)}, N_n)$.
    From this description it follows immediately that the variables in the interventional and observational case can be obtained by fixing the values of $X_{\bar{i}}$ respectively to i) an arbitrary value $x_{\bar{i}}$ and ii) the same value of the corresponding node in the original $G$, $X_{\bar{i}} = X_i$.
    We can then define the function $g$ accordingly as the one that takes any $\vec q \in \Cl(G_I)$ to $g(\vec q) = (\vec p_0, \vec p_I)$, where
    \begin{align}
        \vec p_0 &= q(x_U | \{ x_{\bar{i}} = x_i \}_{i \in I}) \\
        \vec p_I &= \sum_{x_I} q(x_U | \{ x_{\bar{i}} \}_{i \in I})\, .       
    \end{align}

    To show that $g$ is surjective we can notice that, given a $\vec p \in \Cl_I(G)$, one can always construct a $\vec q \in \Cl(G_I)$ that maps to $\vec p$ under $g$, by again describing the system using structure equations, and getting the distribution generated by the same functions $f_k$ but substituting $X_i$ for $X_{\bar{i}}$ for any $i \in I$ whenever it appears in its argument.
\end{proof}

In this sense, the constraints imposed by a certain interventional scenario $(G, I)$, can be studied in the corresponding larger exogenized scenario $G_I$.
In general the interventional scenario can be less restrictive, in the sense that we could have some $q \not\in \Cl(G_I)$ for which $g(q) \in \Cl_I(G)$.
Nonetheless, in such cases as the instrumental scenario with dichotomic variables, whose corresponding exogenized graph is precisely the Bell DAG, the interventional constraints turn out to coincide, since the $g$ is also injective as shown by Eq.~\eqref{eq:instr2bell}.

\section{Interventions and Quantum Steering}
\label{sec:steering}

Moving beyond device-independent scenarios, the DAG representation of causal models can be effectively adapted to the use of quantum observable nodes \cite{barrett2019quantum}, which then can be interpreted as channels from input Hilbert spaces to output Hilbert spaces. Important quantum information protocols such as remote state preparation \cite{bennett2001remote} and dense coding \cite{bennett1992communication,moreno2021semi} or relevant measures of quantum correlations such as quantum steering \cite{wiseman2007steering,cavalcanti2016quantum} can be modeled in that manner. To illustrate, take the DAG represented in Fig. \ref{fig:instrumental_dag}, for example, and consider $B$ as a quantum node. This scenario is a direct adaptation from the instrumental scenario to the quantum case, but can also be seen as the scenario of quantum steering with communication allowed from the party with a black-box device to the party with a fully characterized device \cite{nery2018quantum}. The result of the channel associated with node $B$ is a set of local states that depend on the variable $A$ as a preparation parameter and the latent factor $\Lambda$ either as an additional parameter (if it is classical) or as the input state upon which a completely positive and trace preserving (CPTP) channel (that depends on $A$) acts.

The main object associated with such scenarios is called assemblage \cite{cavalcanti2016quantum}, a collection of states $\sigma_{a|x}$ with less-than-unit trace, that encodes both the conditional state obtained on node $B$ and the distribution associated with the values of $A$ and $X$: $p(a|x) = \tr[\sigma_{a|x}]$, $\rho_{a,x} = \sigma_{a|x}/p(a|x)$. In the case of a classical latent variable $\Lambda$, the causal Markov condition that leads to Eq. \eqref{eq:instrumental_markov} changes to
\begin{equation}
\label{eq:steerclas}
    \sigma_{a|x} = \sum_\lambda p(\lambda) p(a|x,\lambda) \rho^B_{a,\lambda},
\end{equation}
while in the case of a quantum $\Lambda$, we have that
\begin{equation}
\label{eq:steerq}
    \sigma_{a|x} = \tr\left[ M^{(a)}_x \otimes \mathcal{E}_a (\rho_\Lambda)\right],
\end{equation}
where the channel mapping a quantum state into a random variable is a measurement, represented here by the POVMs $M^{(a)}_x$, and the quantum-to-quantum channel is represented by CPTP maps $\mathcal{E}_a$ that act only on the subsystem of $\rho_\Lambda$ that is sent to node $B$.

Interventions can be described analogously, by eliminating the parents from the intervened node and observing the emergent causal Markov condition for the modified DAG. In the case of interventions affecting the edges of quantum nodes, either the output space is traced out if it is an output edge that is removed, or the identity is considered as an input if an input edge is removed. For instance, assemblages compatible with the intervened DAG of Fig. \ref{fig:intervention_dag} are given by
\begin{align}
    \sigma^C_{\mathrm{do(a)}} &= \sum_\lambda p(\lambda)\,\rho^B_{a,\lambda},     \label{eq:assemb_class_instrumental}\\
    \textrm{or}\,\,\sigma^Q_{\mathrm{do(a)}} &= \tr_{\Lambda_A}\left[ I \otimes \mathcal{E}_a(\rho_\Lambda) \right],
\end{align}
for classical and quantum $\Lambda$, respectively. We name the full collection of states, combining observational and interventional data, as the extended assemblage, and we represent it by $\{\sigma_{a|x}\}_{a,x} \oplus \{\sigma_{\textrm{do}(a)}\}_a$, where $\{\asmobs\}_{a,x}$ corresponds to the standard collection of states, called the observational part, and $\{\asmint\}_a$ corresponds to the interventional part.

It is known that considering only observational data, the classical and quantum descriptions, respectively, \eqref{eq:steerclas} and \eqref{eq:steerq}, are incompatible for scenarios with $|X|\geq3$, meaning that quantum correlations can be probed even in the presence of classical communication (and no inputs on the side with a trusted device) \cite{nery2018quantum}. For $|X|=2$, however, no violation of the classical description has been found so far and it is known that, with a classical $B$, no violation is possible if one is restricted to observational data only \cite{henson2014theory}. However, as we show next, non-classical behavior becomes possible already with $|X|=2$ if we consider the extended assemblage including interventional data, thus generalizing the device-independent results of \cite{chaves2018quantum,gachechiladze2020quantifying}.

A direct characterization of the set of permissible extended assemblages is difficult in general since the set boundary is characterized by a continuum of states. However, as proven in the next lemma, the sets for both classical and quantum correlations are each convex, which allows their characterization in terms of a convex optimization problem and ultimately leads to a semi-definite program for testing the compatibility for given assemblages. The lemma states that the interventional data part of the assemblage respects the convex combination that produced the original observational data assemblages. More precisely:
\begin{lemma}
Given two extended assemblages $\{\sigma^A_{a|x}\} \oplus \{\sigma^A_{\mathrm{do}(a)}\}$ and $\{\sigma^B_{a|x}\} \oplus \{\sigma^B_{\mathrm{do}(a)}\}$, compatibility with the DAG of Fig. \ref{fig:instrumental_dag} implies compatibility for the combined extended assemblage $\{p\,\sigma^A_{a|x} + (1-p)\,\sigma^B_{a|x}\} \oplus \{p\,\sigma^A_{\mathrm{do}(a)} + (1-p)\,\sigma^B_{\mathrm{do}(a)}\}$, for any $p \in [0,1]$. Moreover, this convexity holds for both classical and quantum latent $\Lambda$.
\end{lemma}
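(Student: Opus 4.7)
The plan is to exhibit, for each of the two cases, a single underlying causal model whose extended assemblage equals the target convex combination. The key structural observation is that in Eqs.~\eqref{eq:steerclas}--\eqref{eq:steerq} and their interventional counterparts (in particular Eq.~\eqref{eq:assemb_class_instrumental}), both $\sigma_{a|x}$ and $\sigma_{\mathrm{do}(a)}$ are \emph{linear} functionals of the underlying latent distribution in the classical case and of the density operator in the quantum case. Hence convexity of the set of compatible extended assemblages reduces to finding a DAG-respecting way of mixing two given models so that the right observational and interventional parts come out simultaneously.

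For the classical case I would enlarge the latent to take values in the disjoint union $\Lambda_A \dcup \Lambda_B$ of the two original latent sets, with mixed distribution $p''(\lambda) = p\, p_A(\lambda)$ on the first branch and $(1-p)\, p_B(\lambda)$ on the second, inheriting the response kernels $p(a|x,\lambda)$ and conditional states $\rho^B_{a,\lambda}$ branch-wise from the original two models. Substituting this latent into \eqref{eq:steerclas} and \eqref{eq:assemb_class_instrumental} then yields $p\,\sigma^A_{a|x}+(1-p)\,\sigma^B_{a|x}$ and $p\,\sigma^A_{\mathrm{do}(a)}+(1-p)\,\sigma^B_{\mathrm{do}(a)}$, respectively, by linearity.

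For the quantum case I would implement the same idea by classically flagging the two branches inside the latent Hilbert space. Introducing a pair of two-dimensional registers $C_A$ and $C_B$ (one attached to each side of the latent), I would take
\begin{equation*}
\begin{split}
\rho''_\Lambda &= p\, \rho^A_\Lambda \otimes \ket{0}\bra{0}_{C_A} \otimes \ket{0}\bra{0}_{C_B} \\
&\quad + (1-p)\, \rho^B_\Lambda \otimes \ket{1}\bra{1}_{C_A} \otimes \ket{1}\bra{1}_{C_B}\,,
\end{split}
\end{equation*}
together with the controlled POVM $M''^{(a)}_x = M^{A,(a)}_x \otimes \ket{0}\bra{0}_{C_A} + M^{B,(a)}_x \otimes \ket{1}\bra{1}_{C_A}$ and a CPTP map $\mathcal{E}''_a$ that conditionally applies $\mathcal{E}^A_a$ or $\mathcal{E}^B_a$ according to the value of the $C_B$ flag (easily written in Kraus form). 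The orthogonality of the flag states kills all cross terms, so substituting into \eqref{eq:steerq} and into its interventional analogue (in which $M^{(a)}_x$ is replaced by the identity and $C_A$ is traced out) reproduces the two desired mixtures term by term.

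The step I expect to be most delicate is keeping the two flag registers \emph{classically correlated} across the $A/B$ split of $\rho''_\Lambda$, so that the branch selected by the POVM on $A$'s side stays synchronized with the one selected by the channel on $B$'s side; a shared classical label as above suffices and introduces no non-classical resource beyond $\rho^A_\Lambda$ and $\rho^B_\Lambda$. With this bookkeeping in place the rest is routine linear algebra, and the same argument generalizes to any causal scenario in which both observational and interventional distributions depend linearly on the latent, in line with the framework of Sec.~\ref{sec:sec4}.
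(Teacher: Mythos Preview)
Your proposal is correct and essentially identical to the paper's own proof: in the classical case you enlarge the latent by a branch label (the paper writes $\Lambda' = \Lambda \times \Lambda_f$ instead of your disjoint union, but these are equivalent), and in the quantum case you append a pair of classically correlated flag registers, one for each side, and condition the POVMs and CPTP maps on them---exactly the construction the paper gives with $\rho_{\Lambda'} = p\,|00\rangle\langle 00|_{F_A F_B}\otimes\rho^A_\Lambda + (1-p)\,|11\rangle\langle 11|_{F_A F_B}\otimes\rho^B_\Lambda$. Your remark that the delicate point is keeping the two flags synchronized across the bipartition is precisely the issue the paper addresses by sending $F_A$ to the $A$-side and $F_B$ to the $B$-side.
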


\begin{proof}
The proof resides in the fact that there is no limitation for the ``size'' of $\Lambda$, that is, on its cardinality in the classical case, or its dimension, in the quantum case. A convex combination of the respective decompositions can be made possible with the addition of a flag $\lambda_f \in \{A,B\}$, such that when the flag assumes the value $A$ with probability $p$, or $B$ with probability $1-p$, the corresponding decomposition for assemblages $\sigma^A_{a|x}$, $\sigma^A_{\mathrm{do}(a)}$, or $\sigma^B_{a|x}$, $\sigma^B_{\mathrm{do}(a)}$ should be used, in that order.

For the classical case, this is done explicitly by setting $\Lambda' = \Lambda \times \Lambda_f$ and 
\begin{equation}
    p_{\lambda'} = 
    \begin{cases}
    p\,p_\lambda,\quad &\textrm{for}\,\, \lambda' = (\lambda,A) \cr 
    (1-p)\,p_\lambda,\quad & \textrm{for}\,\, \lambda' = (\lambda,B)
    \end{cases},
\end{equation}
with the local distributions and local hidden states following a similar pattern: e.g., $p(a|x,\lambda') = p_A(a|x,\lambda)\,\delta_{\lambda_f,0} + p_A(a|x,\lambda)\,\delta_{\lambda_f,1}$, where $\delta$ is the Kronecker delta. 

For the quantum case, a similar strategy can be used by augmenting the shared state with flags, as $\rho_{\Lambda'} = p\,|00\rangle\langle00|_{F_A,F_B}\otimes\rho^A_\Lambda + (1-p)\,|11\rangle\langle 11|_{F_A,F_B}\otimes\rho^B_\Lambda$, where flag $F_A$ is sent to the A-node side and $F_B$, to B-node side. Local measurements and channels have to be implemented conditionally on projecting the corresponding flag onto $|0\rangle$ or $|1\rangle$. Measurements for the assemblage $\sigma^A$ are implemented conditioned on the flag being $|0\rangle$ and for $\sigma^B$ if the flag is $|1\rangle$. Similarly, the channels $\mathcal{E}^A_a$ and $\mathcal{E}^B_b$ are implemented if $F_B$ is $|0\rangle$ or $|1\rangle$, respectively.
\end{proof}


Once it is established that the set of classical extended assemblages is convex, a robustness-like measure for incompatibility with a classical description can be obtained in the form of a semi-definite program (SDP). Given an extended assemblage $\{\sigma_{a|x}\}_{a,x} \oplus \{\sigma_{\mathrm{do}(a)}\}_a$, it is either the case that it admits a classical decomposition, or that some classically correlated assemblage can be combined with it, in the form of noise, that results in a classically decomposable assemblage. The task is to find the minimal weight $p$, such that 
\begin{align}
    p\,\sigma^{\textrm{N}}_{a|x} + (1-p)\,\sigma_{a|x} &= \sigma^{\textrm{L}}_{a|x}, \nonumber \\
    p\,\sigma^{\textrm{N}}_{\mathrm{do}(a)} + (1-p)\,\sigma_{\mathrm{do}(a)} &= \sigma^{\textrm{L}}_{\mathrm{do}(a)},
\end{align}
where $\sigma^{\textrm{N}}_{a|x}$ and $\sigma^{\textrm{L}}_{a|x}$ admit the decomposition of Eq.\ \eqref{eq:steerclas} and $\sigma^{\textrm{N}}_{\mathrm{do}(a)}$ and $\sigma^{\textrm{L}}_{\mathrm{do}(a)}$ admit the decomposition of Eq.\ \eqref{eq:assemb_class_instrumental}. In the form presented above, the constraints would be nonlinear in the variables to be optimized and thus would not constitute an SDP--note that, in this formulation, we would need to optimize separately, for instance, both the noise term $\sigma^{\textrm{N}}_{a|x}$ and the weight $p$. 
To turn the problem into a valid SDP, some adaptations are required. An equivalent way of stating the problem is presented below, only involving linear equality and linear inequality constraints, which are then allowed to be solved as an SDP:
\begin{subequations}
\begin{align}
\min&\quad \tau \label{s_eq:prim_objective} \\
\textrm{s.t.}&\quad \sigma_{a|x} + \tilde{\sigma}^{\textrm{N}}_{a|x} = \tilde{\sigma}^{\textrm{L}}_{a|x} \label{s_eq:prim_noise_add} \\
&\quad \sigma_{\textrm{do}(a)} + \sum_\lambda \zeta_{a,\lambda} = \sum_\lambda \xi_{a,\lambda}, \label{s_eq:prim_interv_noise} \\
&\quad \tilde{\sigma}^{\textrm{N}}_{a|x} = \sum_\lambda D_\lambda(a|x)\,\zeta_{a,\lambda}, \label{s_eq:prim_local_state_noise} \\
&\quad \tilde{\sigma}^{\textrm{L}}_{a|x} = \sum_\lambda D_\lambda(a|x)\,\xi_{a,\lambda}, \label{s_eq:prim_local_state_result} \\
&\quad \zeta_{a,\lambda} \succeq 0,\,\, \mathrm{Tr}[\zeta_{a,\lambda}] = f_\lambda, \vphantom{\sum_\lambda} \label{s_eq:prim_noise_prob_lam} \\
&\quad \xi_{a,\lambda} \succeq 0,\,\, \mathrm{Tr}[\xi_{a,\lambda}] = g_\lambda, \vphantom{\sum_\lambda} \label{s_eq:prim_result_prob_lam} \\
\textrm{and}&\quad \sum_\lambda f_\lambda = \tau, \label{s_eq:robustness_normalization}.
\end{align}
\label{eq:Steering_Primal}
\end{subequations}
The adaptations involved correspond to optimizing over $\tau = p/(1-p)$, which, for $p \geq 0$ is a monotonically increasing quantity with respect to $p$ in the interval $[0,+\infty)$; embedding the value of $\tau$ into the overall trace of $\tilde{\sigma}^{\textrm{N}}_{a|x}$, so that constraints \eqref{s_eq:prim_local_state_noise} and \eqref{s_eq:robustness_normalization} combined imply $\sum_\lambda \tr[\zeta_{a,\lambda}] = \tau$ (or, equivalently, $\sum_a \tr[\tilde{\sigma}^{\textrm{N}}_{a|x}] = \tau)$. Also $\tilde{\sigma}^{\textrm{L}}_{a|x}$ is left unnormalized since, to match equality \eqref{s_eq:prim_noise_add}, it must hold that $\sum_\lambda \tr[\xi_{a,\lambda}] = 1 + \tau$, assuming the use of a normalized input assemblage $\asmobs$. Another manipulation of the original problem involves concentrating all probabilistic/non-deterministic behavior of $p(a|x,\lambda)$ into the probability of $\lambda$ itself, leaving the deterministic response functions $D_\lambda(a|x) = \delta_{a,f_\lambda(x)}$, where $\lambda$ now is an index for the functions that map $X$ to $A$ and $\delta_{a,a'}$ is the Kronecker delta. This way, the probabilities are combined into the weight of the local state terms, $\zeta_{a,\lambda}$ and $\xi_{a,\lambda}$, as, e.g., $\tr[\zeta_{a,\lambda}] = \tau \cdot p_\lambda$. Positivity of the local states ensures that a valid probability distribution can be obtained for $\Lambda$. 


One advantage of formulating the compatibility problem as an SDP is that it is immediately possible to obtain a dual formulation for the optimization problem that satisfies strong duality and which provides, as an addendum to the robustness, linear functionals that act as witnesses of incompatibility with the model implied by Fig. \ref{fig:instrumental_dag}. In comparison with traditional tests for compatibility solely involving the observational data, the functionals obtained here include also a part that acts on the interventional data. The dual formulation is given by
\begin{subequations}
\begin{align}
\max&\quad \sum_{a,x} \tr\left[W_{a,x}\,\asmobs\right] + \sum_a \tr\left[V_a\,\asmint\right] \label{s_eq:dual_objective} \\
\textrm{s.t.}&\quad \sum_x D_\lambda(a|x)\,W_{a,x} + V_a \leq \delta^\xi_{a,\lambda}\,\mathbbm{1},\,\, \sum_a \delta^\xi_{a,\lambda} = 0 \label{s_eq:dual_upper_bound} \\
&\quad \sum_x D_\lambda(a|x)\,W_{a,x} + V_a \geq -\delta^\zeta_{a,\lambda}\,\mathbbm{1},\,\, \sum_a \delta^\zeta_{a,\lambda} = 1 \label{s_eq:dual_lower_bound}
\end{align}
\label{eq:Steering_Dual}
\end{subequations}
Interestingly, causal bounds based on observational data can be derived from such witnesses, as indicated in the theorem below.

\begin{prop}
\label{thm:obs_bnd}
Given witnesses $\{W_{a,x}\}_{a,x}$ and $\{V_a\}_a$ satisfying the constraints of Eq. \eqref{eq:Steering_Dual}, a general inequality for testing the incompatibility of an extended assemblage with the model of the instrumental DAG (Fig. \ref{fig:instrumental_dag}) is given by
\begin{equation}
    \sum_{a,x} \tr[W_{a,x}\,\sigma_{a|x}] > \sum_a \|V^-_a\|_\infty,
\end{equation}
where $V^-_a = (|V_a| - V_a)/2$ are the projections onto the negative subspaces of the interventional part of the witness. A necessary criterion for the usefulness of the observational witness is given by $\sum_x \max_a \|W^+_{a,x}\|_\infty \geq \sum_a \|V^-_a\|_\infty$, where $W^+_{a,x}$ is the projection onto the positive subspace of $W_{a,x}$.
\end{prop}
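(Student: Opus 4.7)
The plan is to leverage SDP duality of Eqs.~\eqref{eq:Steering_Primal}--\eqref{eq:Steering_Dual}: use any dual-feasible pair as a non-classicality certificate, relax the interventional contribution to obtain a purely observational bound, and then bound the largest value the observational functional can attain to derive a necessary usefulness criterion. First, when the extended assemblage is classical the primal attains $\tau^*=0$ by plugging the classical decomposition in with no added noise. Assuming strong duality---which holds here because the primal is strictly feasible (enough classical noise is always compatible)---the dual optimum equals the same $\tau^*$, so every dual-feasible pair $(\{W_{a,x}\},\{V_a\})$ satisfies $\sum_{a,x}\tr[W_{a,x}\sigma_{a|x}] + \sum_a \tr[V_a\,\sigma_{\mathrm{do}(a)}] \le 0$ on any classical extended assemblage. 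Any strict violation therefore certifies incompatibility with the instrumental DAG of Fig.~\ref{fig:instrumental_dag}.

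Next, to convert this into an inequality that only probes the observational part I decompose $V_a = V^+_a - V^-_a$ into positive and negative parts and use $\sigma_{\mathrm{do}(a)} \succeq 0$ together with $\tr[\sigma_{\mathrm{do}(a)}]=1$ (which follows from Eq.~\eqref{eq:assemb_class_instrumental} and its quantum analog) to write $\tr[V_a\,\sigma_{\mathrm{do}(a)}] \ge -\tr[V^-_a\,\sigma_{\mathrm{do}(a)}] \ge -\|V^-_a\|_\infty$. Substituting this worst-case lower bound into the certificate condition and rearranging yields the claimed sufficient criterion $\sum_{a,x}\tr[W_{a,x}\sigma_{a|x}] > \sum_a \|V^-_a\|_\infty$, which depends only on the observational data while still certifying non-classicality whenever it holds.

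For the necessary usefulness criterion I upper-bound the LHS over all valid observational assemblages. For each $x$ and $a$, positivity of $\sigma_{a|x}$ gives $\tr[W_{a,x}\sigma_{a|x}] \le \|W^+_{a,x}\|_\infty \tr[\sigma_{a|x}]$, and the normalization $\sum_a \tr[\sigma_{a|x}] = 1$ allows concentrating all the probability for that setting on the $a$ attaining $\max_a \|W^+_{a,x}\|_\infty$. Summing over $x$ gives the ceiling $\sum_x \max_a \|W^+_{a,x}\|_\infty$; if this is strictly less than $\sum_a \|V^-_a\|_\infty$ no observational assemblage can ever trigger the sufficient criterion, so the stated inequality is indeed necessary for the witness to be useful. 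The main obstacle I expect is the strong-duality step: one must verify that the primal relaxations---the unnormalized auxiliary states $\tilde\sigma^{\mathrm{N}}_{a|x},\tilde\sigma^{\mathrm{L}}_{a|x}$ and the rescaling $\tau=p/(1-p)$---preserve the threshold $\tau^*=0$ for classical assemblages exactly; once that is secured, the remaining manipulations are routine operator-norm bookkeeping using the positive/negative parts of $V_a$ and $W_{a,x}$ together with the normalizations of observational and interventional assemblages.
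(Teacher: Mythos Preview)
Your proposal is correct and follows essentially the same route as the paper: establish that the full dual functional is nonpositive on every classical extended assemblage, then bound the interventional contribution below by $-\sum_a\|V^-_a\|_\infty$ and, for the usefulness criterion, bound the observational contribution above by $\sum_x\max_a\|W^+_{a,x}\|_\infty$. The only difference is that the paper obtains the first step directly from the dual constraint~\eqref{s_eq:dual_upper_bound} applied to local hidden states, rather than invoking SDP duality; so your concern about strong duality is unnecessary---weak duality already yields the needed inequality, and even that can be bypassed by the paper's direct argument.
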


\begin{proof}
First, it should be noticed that the inequality constraint \eqref{s_eq:dual_upper_bound} implies that no classically correlated assemblage can attain a positive value for the assemblage, since, for any given set of local states $\sigma_{a,\lambda} \geq 0$ with $\tr[\sigma_{a,\lambda}] = f_\lambda$ and $\sum f_\lambda = 1$, the inequality implies
\begin{equation}
\sum_{a,x,\lambda} \tr[W_{a,x}\, D_\lambda(a|x)\,\sigma_{a,\lambda}] + \sum_{a,\lambda} \tr[V_a \sigma_{a,\lambda}] \leq \sum_{a,\lambda} \delta^\xi_{a,\lambda}\, f_\lambda = 0,
\end{equation}
where the last equality results from the constraint $\sum_a \delta^\xi_{a,\lambda} = 0$. The claim follows from noting that $\sum_\lambda \sigma_{a,\lambda}$ is the interventional data associated to the assemblage $\sum_\lambda D_\lambda(a|x)\,\sigma_{a,\lambda}$.

Now, considering only the observational part on the left-hand side, we observe that
\begin{eqnarray}
\nonumber
   & & \sum_{a,x}\tr\left[W_{a,x} \left(\sum_\lambda D_\lambda(a|x)\,\sigma_{a,\lambda}\right)\right] \\ & & \nonumber \leq -\sum_a \tr[(V^+_a - V^-_a) \sum_\lambda \sigma_{a,\lambda}] \nonumber \\
    & &\leq \sum_a \tr[V^-_a\,\sum_\lambda \sigma_{a,\lambda}] \nonumber  \\
    & &\leq \sum_a \|V^-_a\|_\infty\,\left\|\sum_\lambda \sigma_{a,\lambda}\right\|_1 = \sum_a \|V^-_a\|_\infty,
\end{eqnarray}
where the last inequality is obtained by applying the Hölder inequality for Schatten p-norms on the operators $V^-_a$ and $\sum_\lambda \sigma_{a,\lambda}$. For the final equality, it is noted that $\| \sum_\lambda \sigma_{a,\lambda}\|_1 = \tr[\sum_\lambda \sigma_{a,\lambda}] = 1$.
Hence, the first part of the theorem is proved.

For the usefulness criterion, note that, for a generic assemblage $\sigma_{a|x}$, following a similar procedure to the one above, an upper bound for the attainable value of the observational part is given by $\sum_{a,x} \tr[W_{a,x}\sigma_{a|x}] \leq \sum_{a,x} \|W^+_{a,x}\|_\infty \tr[\sigma_{a|x}]$. Since $\tr[\sigma_{a|x}]$ is a conditional distribution $p(a|x)$, it can be used that an extremal distribution would be a deterministic behavior collecting the largest values of $\|W^+_{a,x}\|_\infty$ for each value of $x$, and thus
\begin{equation}
    \sum_{a,x} \tr[W_{a,x}\,\sigma_{a|x}] \leq \sum_x \max_a \|W^+_{a,x}\|_\infty.
\end{equation}
It is clear that the observational part of the witness cannot be useful on its own if the upper bound is no greater than the bound set by the interventional part, $\sum_a \|V^-_a\|_\infty$.
\end{proof}

\subsection{Applications of Interventional Steering}
In what follows, we apply the SDP established in Eq.\ \eqref{eq:Steering_Dual} to scenarios relevant for transmission of quantum information or processing of quantum information. In the bipartite case, we consider the model depicted in Fig. \ref{fig:instrumental_dag} with a quantum node $B$, which can be associated to the task of remote state preparation \cite{bennett2001remote} that can also be seen as the simplest instance of a measurement-based quantum computation \cite{briegel2009measurement,raussendorf2001one}. In its simplest form, assuming knowledge of all measurements and states used, the parties, Alice and Bob, share a maximally entangled two-qubit state, such as $|\Psi^-\rangle = (|01\rangle - |10\rangle)/\sqrt{2}$, and Alice's objective is to project Bob's state onto a given state on the equator of the Bloch's sphere, whose angle is determined by the input $X$, as $\varphi^x$. By communicating a single bit to Bob, Alice fulfills the task by measuring her qubit along the basis $\{\,|\varphi^x_{(a)}\rangle = (|0\rangle + (-1)^a\,e^{i\varphi^x}|1\rangle)/\sqrt{2}\,\}_{a=0,1}$, then informing Bob of the result obtained. The target state is obtained after a $Z$-flip correction on Bob's site, depending on the value of $a$.

The behavior of the robustness for the case of two angles ($|X|=2$) is shown in Fig. \ref{fig:RSP_Measurement}. Assemblages are produced with an underlying model compatible with sharing a singlet state between the parties and Alice applying measurements $\{|\varphi^x_{(a)}\rangle\}$, with fixed $\varphi^0 = 0$ and $\varphi^1$ left as a parameter $\varphi$.

In Fig. \ref{fig:RSP_Schmidt}, the effect of changing the quality of the entanglement shared between the parties is shown. Again, only two different possibilities for the target state are used, but now fixed to $\varphi^0 = 0$ and $\varphi^1 = \pi/2$. The assemblage is now produced by applying measurements on Alice's side on her part of the state $\cos(\theta)|00\rangle + \sin(\theta)|11\rangle$.

\begin{widetext}

\begin{figure}[t!]
\begin{subfigure}{.49\textwidth}
    \centering
    \caption{Varying measurements}
    \includegraphics[width=\textwidth]{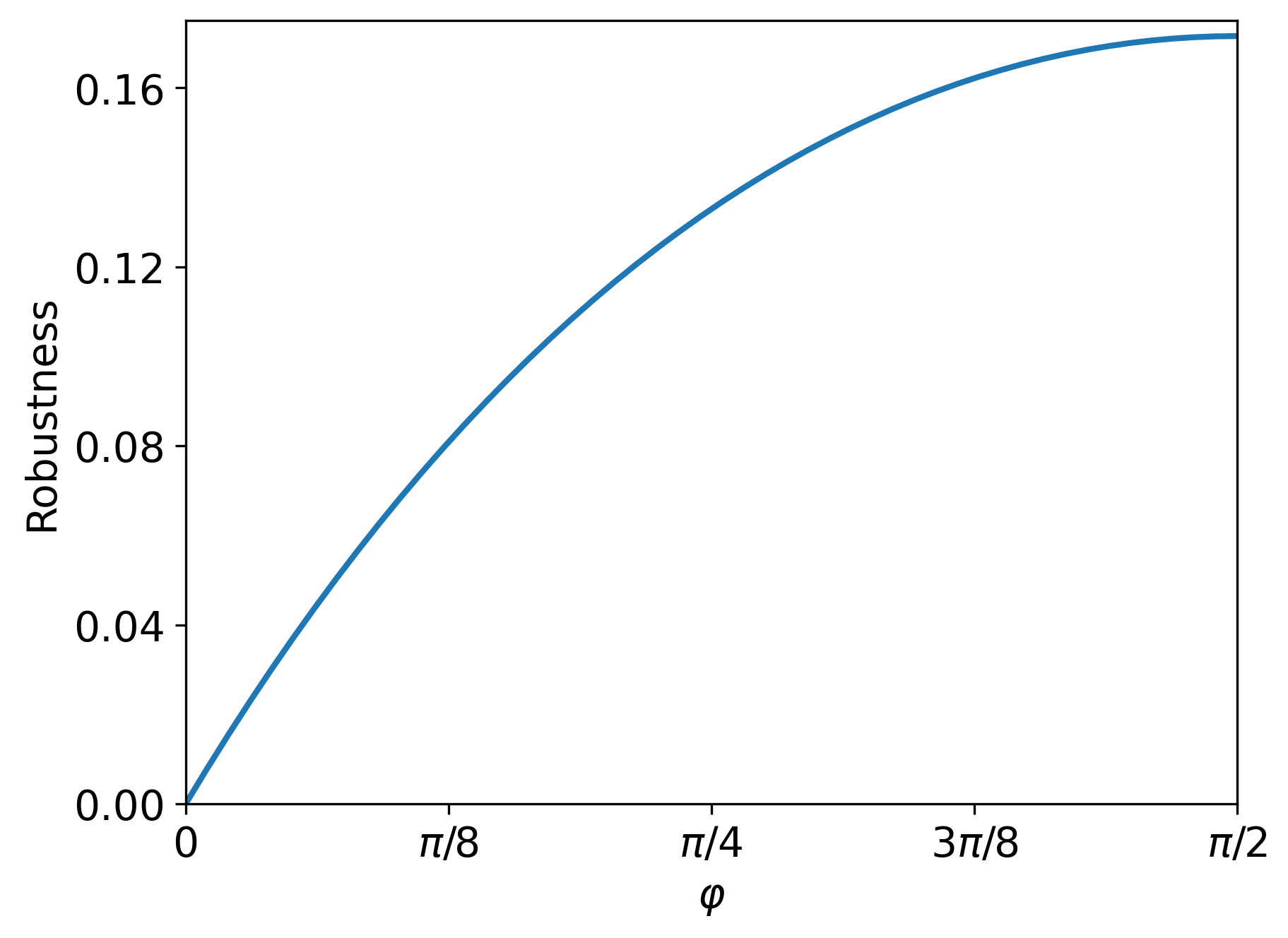}
    \label{fig:RSP_Measurement}
\end{subfigure}
\begin{subfigure}{.49\textwidth}    
    \centering
    \caption{Varying entanglement}
    \includegraphics[width=\textwidth]{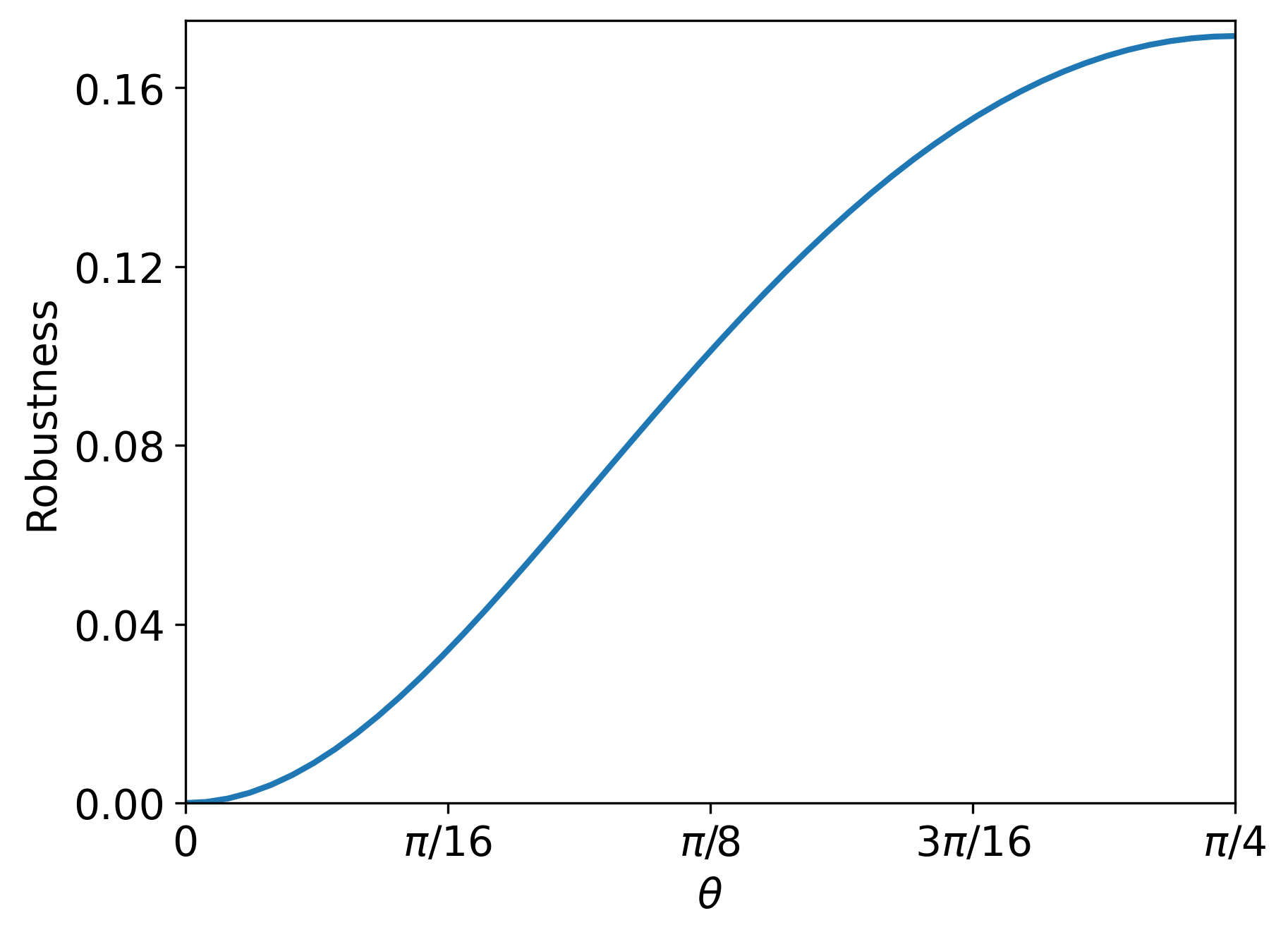}
    \label{fig:RSP_Schmidt}
\end{subfigure}
    \caption{(a) Robustness for remote state preparation with varying angle between $x=0$ and $x=1$ projections. (b) Robustness with fixed measurements for Alice in the equator of the Bloch sphere (projection along $X$, for $x=0$ and along $Y$, for $x=1$) varying the degree of entanglement of the shared state: $\cos(\theta)|00\rangle + \sin(\theta)|11\rangle$.}
    \label{fig:RSP_Curves}
\end{figure}
\end{widetext}

Also in the bipartite scenario, it is known that incompatibilities with the instrumental scenario can be detected for $|X|=3$ using only observational data \cite{nery2018quantum}. The assemblage is obtained with measurements on Alice's side along the eigenstates of $\{-(\sigma_X+\sigma_Z)/\sqrt{2},\, \sigma_X,\,\sigma_Z\}$, where $\sigma_X$ and $\sigma_Z$ are the first and third Pauli matrices, and respectively for $x=0,1,2$, on the state 
\begin{equation}
\rho_v = v\,|\Phi^+\rangle\langle \Phi^+| + (1-v)\,\frac{|00\rangle\langle 00| + |11\rangle\langle 11|}{2}.
\label{eq:state_instrumental_x3}
\end{equation}
It results that, up to numerical precision, critical visibility with interventional data matches the visibility obtained for a standard steering test, without allowing for output communication to Bob, while a higher visibility is required to detect nonclassicality using only observational data and output communication, yet another instance that shows the advantage of including interventions in our description.

Witnesses obtained for this model exemplify an application of Proposition \ref{thm:obs_bnd}, the explicit form of $W_{a,x}$ and $V_a$ for the case of full visibility, $v=1$, are given in Appendix \ref{app:Witnesses}. When applied to the corresponding assemblage, it results in $\tr[W_{a,x}\sigma_{a|x}] \approx 0.672$, while the bound computed from $\sum_a \|V^-_a\|_\infty$ is given by $0.542$.

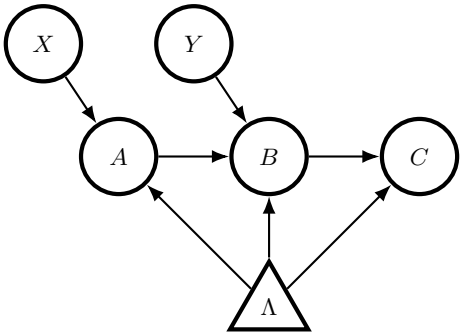
\begin{figure}
\begin{tikzpicture}
    \node[var] (c) at (4,0) {$C$};
    \node[var] (b) at (2,0) {$B$};
    \node[var] (a) at (0,0) {$A$};
    \node[var] (x) at (-1,1.5) {$X$};
    \node[var] (y) at (1,1.5) {$Y$};
    \node[latent] (l) at (2,-2) {$\Lambda$};
    \path[dir] (x) edge (a) (a) edge (b);
    \path[dir] (y) edge (b);
    \path[dir] (b) edge (c); 
    \path[dir] (l) edge (a) (l) edge (b) (l) edge (c);
\end{tikzpicture}
\caption{DAG underlying the adaptative measurement scenario of a one-way quantum computation using a 3-qubits cluster state.}
\label{fig:tripartite_dag}
\end{figure}

Moving to tripartite scenario, consider the DAG of Fig. \ref{fig:tripartite_dag}. Allowing for a cascaded communication structure among the nodes of a multipartite scenario allows for the encoding of an adaptive measurement protocol over shared quantum states, such as the one used for one-way quantum computing \cite{briegel2009measurement,raussendorf2001one}. Compatibility with such a scenario implies, for a quantum output state at $C$ and a shared global state $\rho_\Lambda$, existence of a measurement at the $B$ site that depends on the outcome of the first measurement at $A$'s site and corrections over the final state that depends on the results of previous measurements:
\begin{equation}
    \sigma_{a,b|x,y} = \tr_{A,B}[ M^{(a)}_x \otimes M ^{(b)}_{y,a} \otimes \mathcal{E}_b (\rho_\Lambda)].
\end{equation}

An interventions on $B$ has the effect of removing the indirect link between $A$ and $C$ and separating the nodes conditioned on $\Lambda$,
\begin{equation}
    \sigma_{a,\textrm{do}(b)|x} = \tr_{A,B}[ (M^{(a)}_x \otimes I \otimes \mathcal{E}_b (\rho_\Lambda)].
\end{equation}
A similar test of Eq.\ \eqref{eq:Steering_Dual} can be devised for this scenario, including the interventional data, which allows for the testing of incompatibility with models where only a classical source of correlation $\Lambda$ is available.

Using a standard resource for measurement-based quantum computing, the graph state $|G_3\rangle = (|+0+\rangle + |-1-\rangle)/\sqrt{2}$, where $|\pm\rangle = (|0\rangle \pm |1\rangle)/\sqrt{2}$, and allowing for some noise, quantified by the visibility $v$ of the graph state as $v|G_3\rangle\langle G_3| + (1-v)\mathbbm{1}/8$, the critical value of $v$ for which the state reveals incompatibility with a classical shared resource is found to be approximately $0.577$ when interventional data is included, assuming the assemblage is produced with measurements along the eigenbasis of Pauli matrices $X$ for $x=0$ and $Y$ for $x=1$ for Alice, and along $\cos((-1)^a\,\pi/4)X + \sin((-1)^a\,\pi/4)Y$ and $\cos((-1)^a\,3\pi/4)X + \sin((-1)^a\,3\pi/4)Y$ for Bob (adapting the sign depending on Alice's outcome). This result should be compared with the sole use of observational data, where the critical visibility is found to be approximately $0.744$, a clear reduction in the ability of identifying nonclassicality from the lack of the extra information.

Interestingly, if an extra assumption is made, that the local states of Charlie should not depend individually on the outcomes of Bob or, more strongly, that the experiment is performed in conditions that ensure that Bob has no direct influence on Charlie, the critical visibilities for nonclassical correlations coincide for both using only observational data and including interventional data. This results from the stronger relation that, if no direct communication from Bob to Charlie is ensured, then $\sigma_{a,\textrm{do}(b)|x} = \sum_b \sigma_{a,b|x,y}$, and thus no extra information is carried by the interventional data. In particular, an adapted witness, given by $W_{a,b,x,y} + \sum_b V_{a,b,x}$, is an example of nonclassicality witness for the observational data that is also optimal for its detection for the given assemblage $\sigma_{a,b|x,y}$.

\section{Discussion}
\label{sec:discussion}

Despite its evident ties to classical notions of cause and effect, only recently has Bell's theorem begun to undergo an analysis through the lens of causality theory \cite{wood2015lesson,chaves2015unifying}. This has led to numerous generalizations, especially within the realm of quantum networks \cite{tavakoli2022bell}. However, the exploration of a central concept in causality theory - that of an intervention - remains significantly unexplored.

In this work we give an initial step in this direction, by conducting a detailed analysis of the instrumental scenario \cite{pearl1995testability,chaves2018quantum}, a paradigmatic context where interventions have been introduced within the classical realm. Unlike previous attempts, both in the classical \cite{balke1997bounds,miklin2022causal} and in the quantum contexts \cite{hutter2023quantifying}, that condense all the information coming from interventions in a single quantifier of causality, here we propose to integrate observational and interventional data into a novel form of hybrid Bell inequalities.
 
Focusing on a specific case of the instrumental scenario, where measurement outcomes are binary but the inputs can assume any discrete value, we obtain a complete characterization of the observational-interventional polytope, described by a single class of non-trivial inequality. As demonstrated, this inequality is equivalent to a Hardy-like Bell inequality, albeit describing distinct causal structures and quantum experiments. This equivalence underscores a significant enhancement in the critical detection efficiency required for quantum violations of such inequalities compared to previous methodologies \cite{cao2021detection}. Moving beyond the instrumental scenario, we also proved a general connection between DAGs with interventions and exogeneized DAGs without interventions, establishing a more general mapping explaining the equivalence of our hybrid inequality with a Hardy-like Bell inequality.

Finally, we have also applied the use of interventions a quantum steering scenario \cite{wiseman2007steering,cavalcanti2016quantum}. We show that the use of interventional data allows one to prove the non-classical behaviour of certain experiments that would have a classical simulation if only observational data would be taken into account, thus generalizing to a semi-device-independent scenario the results in Ref. \cite{gachechiladze2020quantifying}.

We believe that interventions are a powerful new tool to understand and witness non-classical behaviour in a variety of causal networks and in their applications to information processing. Their utility extends far beyond the scenarios delineated here and could also be effectively deployed in scenario of growing interest within the quantum community, such as the triangle network \cite{renou2019genuine} or entanglement swapping \cite{branciard2010characterizing,lauand2023witnessing}. Another possibly fruitful application would be to consider the use of interventions on informational principles for quantum theory such as information causality \cite{pawlowski2009information,chaves2015information}, that so far only relies on observational data. We hope our work might trigger further developments in these directions.

\section*{Acknowledgements}
This work was supported by the Serrapilheira Institute (Grant No. Serra-1708-15763), the Simons Foundation (Grant Number 1023171, RC), the Brazilian National Council for Scientific and Technological Development (CNPq) (INCT-IQ and Grant No 307295/2020-6) and the Brazilian agencies MCTIC, CAPES and MEC. PL was supported by São Paulo Research Foundation FAPESP (Grant No. 2022/03792-4). R.N. acknowledges support from the Quantera project Veriqtas.

\bibliography{main_arxiv}

\clearpage 

\appendix

\section{Quantum strategy for the violation of the $I_{22}$ bound}
\label{sec:I22_qstrategy}
In the instrumental scenario $l22$ with $l$ settings and dichotomic measurement, when including the interventional data, the only non trivial bound reduces to
\begin{align}
\nonumber
 I_{l22}=  & &p(b|\Do{a}) -p(a,b|x') + p(a,\bar b|x) \\ \nonumber
     & & + p(\bar a,b|x) - p(\bar a,b|x') \ge 0,
\end{align}

This inequality can be violated, whenever $x \neq x'$ using a bipartite Bell state $\ket{\Phi^+} = (\ket{00} + \ket{11})/\sqrt{2}$ and projective measurements $A^a_x = \ketbra{\alpha_{a,x}}{\alpha_{a,x}}$ and $B^b_a = \ketbra{\beta_{b,a}}{\beta_{b,a}}$ where
\begin{align}
\ket{\alpha_{0,x}} &= \frac{1}{\sqrt{2}} (\ket{0} + e^{i\frac{\pi}{4}} \ket{1}) \\
\ket{\alpha_{0,x'}} &= \frac{1}{\sqrt{2}} (\ket{0} + e^{i\frac{3}{4}\pi} \ket{1}) \\
\ket{\beta_{0,a}} &= \frac{1}{\sqrt{2}} (\ket{0} -i \ket{1}) \\
\ket{\beta_{0,\bar a}} &= \frac{1}{\sqrt{2}} (\ket{0} - \ket{1}) \, .
\end{align}
Using this strategy we reach the negative value $I_{22} = -(\sqrt{2} - 1)/2 \approx -0.2071$, violating maximally the classical bound.
\section{Equivalence between Hardy and CHSH inequalities}
\label{app:Hardy_CHSH}
Now, we consider the Hardy inequalities \eqref{eq: hardy_ineq} and one of its relabelings given by 
\begin{equation}
    \begin{aligned}
        &p(1,0|0,1)+p(0,1|1,0)+p(0,0|0,0)- p(0,0|1,1)\geq0\\
        &p(0,1|0,1)+p(1,0|1,0)+p(1,1|0,0)- p(1,1|1,1)\geq0.\\
    \end{aligned}
\end{equation}
By summing both expressions we have 
\begin{equation}
\begin{aligned}
     D_0 \equiv& p(a\neq b|0,1)+p(a\neq b|1,0)\\
     &+p(a=b|0,0)- p(a=b|1,1)\geq0
\end{aligned}
\end{equation}
 using that $p(a=b|x,y)=1-p(a\neq b|x,y)$ we can write 
 \begin{equation}
 \begin{aligned}
     &-p(a= b|0,1)-p(a= b|1,0)-p(a\neq b|0,0)\\
     &+ p(a\neq b|1,1)+2\geq 0
 \end{aligned}
 \end{equation}
 or simply 
 \begin{equation}
 \begin{aligned}
     D_1\equiv &p(a= b|0,1)+p(a= b|1,0)\\
     &+p(a\neq b|0,0) - p(a\neq b|1,1)\leq 2
 \end{aligned}
 \end{equation}
since $-D_0\leq 0\implies D_1-D_0\leq 2$, i.e. 
 \begin{equation}
 \begin{aligned}
     D_1-D_0&=( p(a= b|0,1)- p(a\neq b|0,1))\\
     &+ (p(a= b|1,0)-p(a\neq b|1,0))\\
     &+( p(a\neq b|0,0)-p(a=b|0,0))\\
     &+ (p(a=b|1,1) - p(a\neq b|1,1))=\\
     &=\langle A_0B_1\rangle+\langle A_1B_0\rangle+\langle A_0B_0\rangle-\langle A_1B_1\rangle\leq 2.
 \end{aligned}
 \end{equation}
 We can do the same procedure for relabelings of the inputs to obtain the inequalities 
 \begin{equation}
     \langle A_{x'}B_y\rangle+\langle A_xB_{y'}\rangle+\langle A_{x'}B_{y'}\rangle-\langle A_xB_y\rangle\leq 2 \quad \forall x\neq x', y\neq y'
 \end{equation}
 Therefore, we have shown Hardy-inequalities $\implies$ CHSH inequalities. Since the CHSH inequalities $\iff p(a,b|x,y)$ is local and $p(a,b|x,y)$ local $\implies$ Hardy inequalities. Then Hardy inequalities $\iff p(a,b|x,y)$ is local \qed.

\section{Explicit form of witnesses for the instrumental scenario with $|X|=3$}
\label{app:Witnesses}

Running SDP \eqref{eq:Steering_Dual} for the assemblage derived from state \eqref{eq:state_instrumental_x3} with three measurement choices for Alice, witnesses $W_{a,x}$ and $V_a$ are obtained numerically as subproduct of the optimization. The explicit form obtained for $W_{a,x}$ when state visibility is $1$ is provided in the table below:
\vspace{2ex}

\begin{tabular}{c|c|c|c|}
\diagbox[height=4ex, width=2em]{\raisebox{0.95\height}{\hphantom{ii.} $x$}}{ \raisebox{-1.6\height}{$a$ \hphantom{l.}}} 
& 0 & 1 & 2 \\ [1ex]
\hline
0
\vphantom{
$\begin{bmatrix}
0 \\ 0 \\ 0
\end{bmatrix}$
}
&
$\begin{bmatrix}
-0.180 & -0.135 \\
-0.135 &  0.0902
\end{bmatrix}$
&
$\begin{bmatrix}
-0.0800 & 0.272 \\
0.272 &  0.0283
\end{bmatrix}$
&
$\begin{bmatrix}
0.268 & -0.0542 \\
-0.0542 & -0.277
\end{bmatrix}$
\\
1
\vphantom{
$\begin{bmatrix}
0 \\ 0 \\ 0
\end{bmatrix}$
}& 
$\begin{bmatrix}
0.0902 &  0.135 \\
0.135  & -0.180
\end{bmatrix}$
&
$\begin{bmatrix}
0.0283 & -0.272 \\
-0.272 & -0.0800
\end{bmatrix}$
&
$\begin{bmatrix}
-0.277 &  0.0542 \\
0.0542 &  0.268
\end{bmatrix}$
\\
\hline
\end{tabular}
\vspace{2ex}

For $V_a$, the resulting form is given as follows:
\begin{eqnarray}
V_0 &= \begin{bmatrix}
-0.104 & -0.0415 \\
-0.0415 & -0.0215
\end{bmatrix} \nonumber\\ [1ex]
V_1 &= \begin{bmatrix}
-0.320 &  \hphantom{-}0.0415 \\
\hphantom{-}0.0415 & -0.403
\end{bmatrix}
\nonumber
\end{eqnarray}

\end{document}